\def\BibTeX{{\rm B\kern-.05em{\sc i\kern-.025em b}\kern-.08em
    T\kern-.1667em\lower.7ex\hbox{E}\kern-.125emX}}
\theoremstyle{definition}
\newtheorem{example}{Example}
\newtheorem{proposition}{Proposition}
\newtheorem{lemma}{Lemma}
\newtheorem{corollary}{Corollary}
\newtheorem{definition}{Definition}
\newtheorem{remark}{Remark}
\DeclareMathOperator{\supp}{supp}
\DeclareMathOperator{\LTA}{LTA}
\DeclareMathOperator{\w}{w}
\newcommand{\wt}{\operatorname{wt}}
\newcommand{\wm}{\w_{\min}}
\newcommand{\I}{\mathcal{I}}
\renewcommand{\S}{\mathcal{S}}
\newcommand{\J}{\mathcal{J}}
\newcommand{\C}{\mathcal{C}}
\newcommand{\bB}{\mathbf{B}}
\newcommand{\ind}{\operatorname{ind}}
\newcommand{\ev}{\operatorname{ev}}
\newcommand{\Stab}{\operatorname{Stab}}
\newcommand{\ft}{\mathbb{F}_2}
\newcommand{\Alow}{{\rm LTA}(m,2)}
\newcommand{\Mon}{\mathcal{M}_{m}}
\begin{document}

\title{Generalized Weight Structure of Polar Codes: Selected Template Polynomials}

\author{
\IEEEauthorblockN{Mohammad Rowshan}
\IEEEauthorblockA{University of New South Wales\\
Sydney, Australia\\
Email: m.rowshan@unsw.edu.au}
\and
\IEEEauthorblockN{Vlad-Florin Dr\u{a}goi}
\IEEEauthorblockA{Aurel Vlaicu University\\
Arad, Romania\\
Email: vlad.dragoi@uav.ro}
}

\maketitle
\pagestyle{empty}

\begin{abstract}
Polar codes can be viewed as decreasing monomial codes, revealing a rich algebraic structure governed by the lower-triangular affine (LTA) group. We develop a general framework to compute the Hamming weight of codewords generated by sums of monomials, express these weights in a canonical dyadic form, and derive closed expressions for key structural templates (disjoint sums, nested blocks, complementary flips) that generate the low and intermediate weight spectrum. Combining these templates with the LTA group action, we obtain explicit multiplicity formulas, yielding a unified algebraic method to characterize and enumerate codewords. 
\end{abstract}


\section{Introduction}
Polar codes, introduced by Ar{\i}kan~\cite{arikan}, form the first explicit family of codes that provably achieve the symmetric capacity of binary-input memoryless channels. They were adopted in 3GPP 5G NR for control channels~\cite{3GPP}, which triggered a large body of work on constructions and decoding algorithms. In contrast, comparatively less attention has been devoted to their algebraic structure and, in particular, to the distance spectrum and exact enumeration of low-weight codewords.

The weight distribution directly governs the block error probability under maximum-likelihood (ML) decoding via the union bound~\cite[Sect.~10.1]{lin_costello}. The leading terms involve the minimum distance and the multiplicities of minimum and near-minimum weight codewords (the \emph{error coefficient}). Even for polar codes, only partial answers are known.

A major step forward was recognizing that polar and Reed–Muller codes are both \emph{decreasing monomial codes}~\cite{bardet2016crypt,bardet}, which exposes a large permutation group—the lower triangular affine (LTA) group—used to characterize minimum-weight codewords~\cite{bardet}, accelerate decoding via permutations~\cite{GEECB21}, and study automorphisms. Closed-form enumerators are now known for minimum-weight codewords~\cite{bardet,rowshan2023formation}, for weights in $[1.5\wm,2\wm)$~\cite{vlad1.5d,ye2024distribution,rowshan2024weight,dragoi2025weight}, and for weight $2\wm$~\cite{rowshan2024weight}, mainly for low- and high-rate regimes where the full weight distribution can be determined; see also related enumeration methods in~\cite{yao,ellouze2023low,liu2025parity}.

In this paper, we advance the algebraic analysis of decreasing monomial codes by developing a general multiscale framework for their weight structure. We derive a general inclusion-exclusion formula for the Hamming weight of polynomials of the form $P(x)=h(x) \sum_i f_i(x)$ and introduce a dyadic decomposition of the normalized weight that explicitly characterizes fractional corrections $d / 2^{\ell}$ arising from monomial overlaps. Building on this, we establish closed-form weight expressions for key structural templates and utilize LTA group orbits to formulate exact counting multiplicities. This purely algebraic framework is applicable to any decreasing monomial code, including standard polar codes and Reed--Muller codes. 


\section{Monomial Representation and Decreasing Sets}
\label{sec:prelim}

We work over $\ft$ with modulo two addition. For $N=2^m$, a codeword is a vector $\bm c\in\ft^N$, with Hamming weight $\wt(\bm c)$ and support $\supp(\bm c)$.

\subsection{Boolean functions and monomials}

Let $x_0,\dots,x_{m-1}$ be Boolean variables and consider the quotient ring $\mathbb{R}_m \triangleq \ft[x_0,\dots,x_{m-1}]/\left\langle x_i^2-x_i\right\rangle_{i=0}^{m-1},$ so a polynomial is a Boolean function on $\{0,1\}^m$. A monomial is
\[
f = \prod_{j=0}^{m-1} x_j^{i_j},\qquad i_j\in\{0,1\}.
\]
Let $\Mon$ be the set of all monomials. If $f = x_{l_1}\cdots x_{l_s}$ with $l_1<\dots<l_s$, define 
$\ind(f) \triangleq \{l_1,\dots,l_s\},\qquad
\deg(f)\triangleq s$. 
The \emph{evaluation map} $\mathrm{ev}:\mathbb{R}_m\to\ft^{2^m}$ evaluates a polynomial at all points of $\{0,1\}^m$ in some fixed order; it is a vector-space isomorphism. For a Boolean function $f$ we write
\[
|f| \triangleq |\{x\in\{0,1\}^m : f(x)=1\}| = \wt(\mathrm{ev}(f)).
\]

If $f$ is a monomial of degree $t$, its support consists of all points where its $t$ variables are $1$ and the remaining $m-t$ variables are free; hence
\begin{equation}
\wt(\mathrm{ev}(f)) = 2^{m-\deg(f)}.
\label{eq:monom-weight}
\end{equation}

\subsection{Monomial codes and decreasing sets}

\begin{definition}[Monomial code]
Let $\I\subseteq\Mon$. The monomial code generated by $\I$ is
\[
\C(\I)\triangleq\operatorname{span}\{\mathrm{ev}(f)\mid f\in\I\}\subseteq\ft^{2^m}.
\]
\end{definition}

Let $r^+(\C)\triangleq\max_{f\in\I}\deg(f)$. Using \eqref{eq:monom-weight}, the minimum distance is $d_{\min}(\C(\I)) = 2^{m-r^+(\C)}$ \cite{bardet}.

We use the partial order $f\mid g$ if $\ind(f)\subseteq\ind(g)$, and the decreasing order $\preceq$ from~\cite{bardet2016crypt,bardet}. 

\begin{definition}[Decreasing set and decreasing monomial code]
A set $\I\subseteq\Mon$ is \emph{decreasing} if $f\in\I$ and $g\preceq f$ imply $g\in\I$. A decreasing monomial code is a monomial code $\C(\I)$ with $\I$ decreasing.
\end{definition}

Reed--Muller and (binary) polar codes are examples of decreasing monomial codes~\cite{bardet2016crypt}.

\subsection{LTA group and orbits}

Let $(\bB,\bm\varepsilon)$ be an affine map on $\ft^m$ with $\bB\in\mathrm{GL}(m,\ft)$ and $\bm\varepsilon\in\ft^m$. The \emph{lower triangular affine group} is
\[
\mathrm{LTA}(m,2)\triangleq\big\{(\bB,\bm\varepsilon) : \bB\ \text{lower triangular},\ b_{ii}=1\big\}.
\]
Each $(\bB,\bm\varepsilon)$ acts on the variables as
\[
x_i \mapsto y_i = x_i + \sum_{j<i} b_{ij} x_j + \varepsilon_i,
\]
and on a monomial $f=\prod_{i\in\ind(f)} x_i$ by substitution:
\[
(\bB,\bm\varepsilon)\cdot f \triangleq \prod_{i\in\ind(f)} y_i.
\]

This action permutes coordinates of $\mathrm{ev}(f)$, hence preserves Hamming weight. We write the orbit
\[
\mathrm{LTA}(m,2)\cdot f \triangleq \{(\bB,\bm\varepsilon)\cdot f\}.
\]

For decreasing monomial codes, $\mathrm{LTA}(m,2)$ is a subgroup of the automorphism group of $\C(\I)$, and orbits of maximal-degree monomials generate all minimum-weight codewords~\cite{bardet}.


\begin{definition}[Partition weights $|\lambda_f|$ and $|\lambda_f (g)|$]
Let $f\!=\!x_{i_0}\cdots x_{i_{s-1}}$ be a monomial with
$0\!\le\! i_0\!<\!\dots\!<\!i_{s-1}\!\le\! m-1$.
Set $\lambda_f \!=\! \bigl(i_{s-1}-(s-1),\,i_{s-2}-(s-2),\,\dots,\,i_0-0\bigr)$, then
\[
|\lambda_f| \;=\; \sum_{t=0}^{s-1} (i_t-t).
\]

More generally, for another monomial
$g=x_{j_0}\cdots x_{j_{t-1}}$ with $0\le j_0<\dots<j_{t-1}\le m-1$,
define
\[
|\lambda_f (g)|
\;\triangleq\;
\sum_{\ell=0}^{t-1}
\Bigl(j_\ell - 
\bigl|\{u\in\ind(f)\cup\{j_0,\dots,j_{\ell-1}\} : u<j_\ell\}\bigr|\Bigr).
\]
Equivalently, for each $j\in\ind(g)$, we count the number of indices
below $j$ that are not already occupied by $\ind(f)$ or by earlier
variables of $g$.

In particular, for $g=f$ this reduces to $|\lambda_f f|=|\lambda_f|$,
and the LTA-orbit of $f$ satisfies
\begin{equation}\label{eq:orbit-single-again}
    \bigl|\LTA(m,2)\cdot f\bigr| = 2^{\deg(f)+|\lambda_f|}.
\end{equation}

\end{definition}

\section{A General Weight Formula for Polynomials}
\label{sec:general-weight}

We now derive a general formula for the Hamming weight of polynomials built as a product of a common factor and a sum of residual monomials. This will be our main tool.


Let $\mathcal{P}_r$ be the set of polynomials in $\mathbb{R}_m$ of total degree $r$. For $P\in\mathcal{P}_r$, we write it in the form
\begin{equation}
P(x) = h(x)\cdot\Big(\sum_{i=1}^q f_i(x)\Big),
\label{eq:P-factorised}
\end{equation}
where $h$ is a monomial (the common factor) and the $f_i$ are residual monomials. Let $a_{\max}\triangleq\max_i \deg(f_i)$ and
\begin{equation}\label{eq:deg-h}
\deg(h)=r-a_{\max}.
\end{equation}
This representation always exists (allowing $h=1$). Let
\[
F(x) \triangleq \sum_{i=1}^q f_i(x).
\]

\noindent We normalize weights with respect to degree $r$ as
\[
d \triangleq 2^{m-r},\qquad
\Sigma(P) \triangleq \frac{\wt(P)}{d}.
\]


\noindent For $\varnothing\neq\mathcal S\subseteq\{1,\dots,q\}$, define the union degree
\[
u_{\mathcal S}\triangleq\deg\Big(\operatorname{lcm}\{f_i : i\in\mathcal S\}\Big).
\]

\begin{proposition}[General weight formula]
\label{prop:general-weight}
Let $P(x)=h(x)\sum_{i=1}^q f_i(x)$ be as in~\eqref{eq:P-factorised}, with $\deg(P)=r$ and $d=2^{m-r}$. Then
\begin{equation}
\wt(P) = d\,\Sigma(F),
\end{equation}
where
\begin{equation}
\boxed{
\Sigma(F) \triangleq
\sum_{\varnothing\neq\mathcal S\subseteq\{1,\dots,q\}}
(-2)^{|\mathcal S|-1}\,2^{\,a_{\max}-u_{\mathcal S}}.
}
\label{eq:Sigma-F}
\end{equation}
\end{proposition}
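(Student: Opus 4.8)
The plan is to reduce the weight computation to a single subcube and then apply a pointwise sign identity together with inclusion–exclusion. First I would note that since $h$ is the extracted common factor, its variable set $\ind(h)$ is disjoint from each $\ind(f_i)$. Because $h$ is a monomial, $\mathrm{ev}(h)$ is the indicator of the subcube $\{x : x_j=1\ \forall j\in\ind(h)\}$, which has dimension $m'\triangleq m-\deg(h)=m-r+a_{\max}$ by \eqref{eq:deg-h}. Off this subcube $P=0$, while on it $P=F$; hence $\wt(P)$ equals the number of points of the $m'$-dimensional free cube at which $F=1$, i.e. $\wt_{m'}(F)$. The disjointness is exactly what makes this legitimate: it guarantees that $F$ restricts to a genuine function of the $m'$ free variables, so no stray multiplicative factor appears and the lcm-degrees $u_{\mathcal S}$ measured among the $f_i$ coincide with those measured on the free cube.

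Next, over the integers with Boolean inputs I would use the identity $(-1)^{F(x)}=\prod_{i=1}^q(-1)^{f_i(x)}$, which is immediate from $F=\sum_i f_i \bmod 2$; equivalently $1-2F(x)=\prod_{i=1}^q\bigl(1-2f_i(x)\bigr)$. Summing over the $m'$-cube then gives $\wt_{m'}(F)=\tfrac12\bigl(2^{m'}-\sum_x\prod_{i}(1-2f_i(x))\bigr)$, converting the modular XOR of supports into an integer counting problem.

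The third step is to expand the product by inclusion–exclusion, $\prod_i(1-2f_i)=\sum_{\mathcal S\subseteq\{1,\dots,q\}}(-2)^{|\mathcal S|}\prod_{i\in\mathcal S}f_i$, and to evaluate each subset sum. Since the $f_i$ are Boolean monomials, $\prod_{i\in\mathcal S}f_i$ is again a monomial, namely $\operatorname{lcm}\{f_i:i\in\mathcal S\}$ of degree $u_{\mathcal S}$, so by \eqref{eq:monom-weight} its support on the $m'$-cube has size $2^{m'-u_{\mathcal S}}$. The empty set contributes $2^{m'}$, which cancels the leading $2^{m'}$, and the scalar collapse $-\tfrac12(-2)^{|\mathcal S|}=(-2)^{|\mathcal S|-1}$ turns the remaining terms into $\sum_{\varnothing\neq\mathcal S}(-2)^{|\mathcal S|-1}2^{\,m'-u_{\mathcal S}}$. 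Substituting $m'=m-r+a_{\max}$ and factoring out $d=2^{m-r}$ produces exactly $d\,\Sigma(F)$ as in \eqref{eq:Sigma-F}.

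The delicate points are bookkeeping rather than conceptual: tracking the sign-and-scale collapse $-\tfrac12(-2)^{|\mathcal S|}=(-2)^{|\mathcal S|-1}$ and the cancellation of the $\mathcal S=\varnothing$ term against $2^{m'}$. I expect the one step that needs genuine care is the reduction of Step 1 — arguing cleanly that $h$ being the common factor forces $\ind(h)\cap\ind(f_i)=\varnothing$, so that $\wt(P)$ on $\{0,1\}^m$ collapses to $\wt_{m'}(F)$ on the free cube with the degrees $u_{\mathcal S}$ interpreted consistently. Everything after that is a direct expansion.
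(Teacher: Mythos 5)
Your proof is correct and takes essentially the same approach as the paper: your pointwise sign identity $1-2F(x)=\prod_{i=1}^q\bigl(1-2f_i(x)\bigr)$ with its subset expansion is precisely the proof of the paper's Lemma~\ref{lem:pie} (which the paper invokes as a black box), and your lcm/support-size step matches the paper's computation $\bigl|\bigcap_{i\in\mathcal S}\supp(h f_i)\bigr|=2^{m-\deg(h)-u_{\mathcal S}}$, including the same implicit reliance on $\ind(h)\cap\ind(f_i)=\varnothing$ that you rightly flag. The only cosmetic difference is that you first restrict to the head subcube of dimension $m'=m-\deg(h)$ and then count there, whereas the paper keeps $h$ inside every intersection term; the two bookkeepings coincide since $2^{m'-u_{\mathcal S}}=2^{m-\deg(h)-u_{\mathcal S}}$.
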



\noindent\textbf{Kernel-based description of weight templates.}
By Proposition~\ref{prop:general-weight}, once the residual monomials $\{f_i\}$ (the kernel $Q$ of
maximum degree $a_{\max}$) are fixed, the \emph{normalized weight} $\Sigma(F)$ is fully determined by
the support combinatorics; the head $h$ only raises the ambient degree from $a_{\max}$ to any
$r \ge a_{\max}$ via $\deg(h)=r-a_{\max}$, without affecting $\Sigma(F)$. Thus, for any target
normalized weight $\Sigma^\star$ one can:
\begin{enumerate}[label=(\roman*)]
  \item enumerate kernels $Q=\sum_i f_i$ with $\Sigma(F)=\Sigma^\star$ (via the union degrees $u_{\mathcal S}$), and
  \item for each such kernel, generate all degree-$r$ templates $P=hQ$ by choosing monomials $h$ of degree $r-a_{\max}$ whose supports are disjoint from that of $Q$.
\end{enumerate}
When classifying templates, we also quotient by affine equivalence: kernels $Q,Q'$ related by an
element of $\LTA(m,2)$ yield the same family of patterns up to code automorphisms. In practice, one
therefore works with canonical kernel representatives modulo $\LTA(m,2)$ and then attaches all
admissible heads $h$ (including $h=1$) for ambient degrees $r \ge a_{\max}$.


\begin{lemma}[Dyadic decomposition]
\label{lem:dyadic}
Let $\Sigma(F)$ be given by~\eqref{eq:Sigma-F}. Define
\(
U \triangleq \max_{\varnothing\neq\mathcal S} u_{\mathcal S}, 
k \triangleq U-a_{\max}. 
\)
Then
\[
\Sigma(F) = \frac{N}{2^k}
= \sum_{j=j_{\min}}^{j_{\max}} b_j 2^{\,j-k},
\]
where for $N\!=\!\sum_j b_j 2^j, b_j\!\in\!\{0,1\}$, we have
\[
N \triangleq 2^{U-a_{\max}}\,\Sigma(F)
= \sum_{\varnothing\neq\mathcal S} (-1)^{|\mathcal S|-1}\,2^{\,U-u_{\mathcal S}+|\mathcal S|-1}
\in\mathbb{Z}.
\]
\end{lemma}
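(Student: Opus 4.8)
The plan is to treat the lemma's two assertions in turn: first the purely algebraic identity that rewrites $N$ as a signed sum of powers of two, and then the structural claim that this $N$ is a non-negative integer — equivalently, that it admits a binary expansion $N=\sum_j b_j 2^j$ with $b_j\in\{0,1\}$, from which the dyadic form of $\Sigma(F)$ follows by dividing through by $2^k$.

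First I would obtain the closed form for $N$ by direct substitution into~\eqref{eq:Sigma-F}. Writing $(-2)^{|\mathcal{S}|-1}=(-1)^{|\mathcal{S}|-1}2^{|\mathcal{S}|-1}$, multiplying through by $2^{U-a_{\max}}$, and collecting the three exponent contributions $(a_{\max}-u_{\mathcal{S}})+(|\mathcal{S}|-1)+(U-a_{\max})$ immediately gives
\[
N = 2^{U-a_{\max}}\Sigma(F) = \sum_{\varnothing\neq\mathcal{S}} (-1)^{|\mathcal{S}|-1}\,2^{\,U-u_{\mathcal{S}}+|\mathcal{S}|-1}.
\]
This is routine bookkeeping; the real content lies in showing every exponent here is non-negative.

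The crux — and the step I expect to be the main obstacle — is establishing $U-u_{\mathcal{S}}+|\mathcal{S}|-1\ge 0$ for every nonempty $\mathcal{S}$. The key structural fact is that, for square-free monomials in $\mathbb{R}_m$, the lcm corresponds to the union of index sets, so $u_{\mathcal{S}}=\bigl|\bigcup_{i\in\mathcal{S}}\ind(f_i)\bigr|$. Since this union is monotone in $\mathcal{S}$, the maximum $U=\max_{\mathcal{S}}u_{\mathcal{S}}$ is attained at the full set $\{1,\dots,q\}$, whence $U=\bigl|\bigcup_{i=1}^q\ind(f_i)\bigr|\ge u_{\mathcal{S}}$ for all $\mathcal{S}$, i.e.\ $U-u_{\mathcal{S}}\ge 0$. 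Combined with $|\mathcal{S}|\ge 1$ this yields $U-u_{\mathcal{S}}+|\mathcal{S}|-1\ge 0$, so each summand $2^{\,U-u_{\mathcal{S}}+|\mathcal{S}|-1}$ is a genuine integer power of two, and $N$, being a signed sum of such integers, lies in $\mathbb{Z}$.

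Finally I would fix the sign. By Proposition~\ref{prop:general-weight}, $\Sigma(F)=\wt(P)/d$ with $d=2^{m-r}>0$, so $\Sigma(F)\ge 0$ and hence $N=2^{k}\Sigma(F)\ge 0$. A non-negative integer has a unique binary expansion $N=\sum_j b_j 2^j$ with $b_j\in\{0,1\}$; letting $j_{\min}$ and $j_{\max}$ denote its least and most significant set bits and dividing by $2^k$ gives $\Sigma(F)=N/2^k=\sum_{j=j_{\min}}^{j_{\max}} b_j\,2^{\,j-k}$, as asserted. I would stress that the entire argument hinges on the monotonicity of the union degree $u_{\mathcal{S}}$, which guarantees the exponents never go negative; the rest is elementary manipulation of powers of two.
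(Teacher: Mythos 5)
Your proof is correct and follows the paper's argument essentially verbatim: factor out $2^{a_{\max}-U}$ (equivalently, multiply by $2^{U-a_{\max}}$), observe that every exponent $U-u_{\mathcal S}+|\mathcal S|-1\ge 0$, conclude $N\in\mathbb{Z}$, and divide the binary expansion of $N$ by $2^k$. Two minor remarks: the step you flag as the crux is immediate from the definition $U=\max_{\varnothing\neq\mathcal S}u_{\mathcal S}$ itself (no appeal to lcm-as-union or monotonicity of $u_{\mathcal S}$ is needed), while your explicit verification that $N\ge 0$ via $\Sigma(F)=\wt(P)/d\ge 0$ is a small point the paper's proof leaves implicit when it invokes the binary expansion with $b_j\in\{0,1\}$.
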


Thus, every 
$\Sigma(P)$ 
is a dyadic rational of the form
\[
\Sigma(P) = \sum_{\ell=\ell_{\min}}^{\ell_{\max}} \frac{c_\ell}{2^\ell},
\]
with integer coefficients $c_\ell$. In practice, for small $q$ one observes that all $c_\ell$ are odd, and different overlap patterns generate characteristic dyadic signatures.

\section{Structural Templates for 
Weights}
\label{sec:templates}

We now instantiate Proposition~\ref{prop:general-weight} for several structural templates that occur naturally in decreasing monomial codes. In the next section these templates will be combined with the algebraic structure of $\C(\I)$ to obtain enumeration formulas.


\begin{lemma}[Disjoint $k$-sum]
\label{lem:disjoint-k}
Let $f_1,\dots,f_k$ be monomials of degree $r$ with pairwise disjoint supports and $kr\le m$. Let
\begin{equation}\label{eq:disjoint-k}
    P(x) = \sum_{i=1}^k f_i(x),\qquad r=\deg(P),
\end{equation}

and $d=2^{m-r}$. Then
\begin{equation}\label{eq:weight-disjoint-k}
    \wt(P) = 2^{m-kr-1}\Big(2^{kr} - (2^r-2)^k\Big),
\end{equation}
and the normalized weight is
\begin{equation}
\Sigma(P) =
\frac{\wt(P)}{d}
= 2^{r-1}\Big(1 - (1-2^{1-r})^k\Big).
\label{eq:Sigma-k-sum}
\end{equation}
\end{lemma}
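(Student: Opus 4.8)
The plan is to apply the general weight formula (Proposition~\ref{prop:general-weight}) directly, since $P$ is already in the factorised form~\eqref{eq:P-factorised} with trivial head $h=1$, $q=k$ residual monomials, and $a_{\max}=\deg(f_i)=r$ for every $i$ (so that $F=P$ and $\Sigma(F)=\Sigma(P)$). The only input I need to extract from the disjoint-support hypothesis is the value of the union degrees $u_{\mathcal S}$. First I would observe that when the index sets $\ind(f_i)$ are pairwise disjoint, the least common multiple of $\{f_i:i\in\mathcal S\}$ is simply their product $\prod_{i\in\mathcal S}f_i$, whose index set is the disjoint union of the $\ind(f_i)$; hence
\[
u_{\mathcal S}=\sum_{i\in\mathcal S}\deg(f_i)=|\mathcal S|\,r .
\]
This is the key step, and it is the only place where disjointness enters the argument.

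With $a_{\max}-u_{\mathcal S}=r-|\mathcal S|\,r=r(1-|\mathcal S|)$, the summand in~\eqref{eq:Sigma-F} depends on $\mathcal S$ only through its cardinality, so I would group the $2^k-1$ nonempty subsets by size $j=|\mathcal S|$, picking up a factor $\binom{k}{j}$:
\[
\Sigma(P)=\sum_{j=1}^{k}\binom{k}{j}(-2)^{\,j-1}\,2^{\,r(1-j)}
=-\,2^{\,r-1}\sum_{j=1}^{k}\binom{k}{j}\bigl(-2^{\,1-r}\bigr)^{j}.
\]
Recognising the remaining sum as the binomial expansion of $(1-2^{1-r})^k$ with the $j=0$ term removed, i.e. $\sum_{j=1}^{k}\binom{k}{j}(-2^{1-r})^j=(1-2^{1-r})^k-1$, I obtain $\Sigma(P)=2^{r-1}\bigl(1-(1-2^{1-r})^k\bigr)$, which is exactly~\eqref{eq:Sigma-k-sum}. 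The unnormalised weight~\eqref{eq:weight-disjoint-k} then follows by multiplying through by $d=2^{m-r}$ and clearing denominators via $(1-2^{1-r})^k=(2^r-2)^k/2^{rk}$, giving $\wt(P)=2^{m-1}\bigl(1-(2^r-2)^k/2^{rk}\bigr)=2^{m-kr-1}\bigl(2^{kr}-(2^r-2)^k\bigr)$.

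I expect the only genuine obstacle to be the rigorous justification that $u_{\mathcal S}=|\mathcal S|\,r$; everything after that is a routine binomial manipulation. The hypothesis $kr\le m$ is what guarantees the configuration is realisable inside $\mathbb{R}_m$ (there are enough variables to host $k$ disjoint degree-$r$ blocks) and that the normalisation $d=2^{m-r}$ is meaningful. As an independent check I would re-derive the same expression by a direct counting argument: since the blocks $\ind(f_i)$ are disjoint, the events $\{f_i=1\}$ are independent under the uniform measure on $\{0,1\}^m$, each of probability $2^{-r}$, and $P=1$ precisely when an odd number of them occur; the classical odd-parity identity $\Pr[\text{odd}]=\tfrac12\bigl(1-(1-2p)^k\bigr)$ with $p=2^{-r}$ reproduces $\wt(P)=2^m\Pr[\text{odd}]=2^{m-1}\bigl(1-(1-2^{1-r})^k\bigr)$, confirming both the template and the application of Proposition~\ref{prop:general-weight}.
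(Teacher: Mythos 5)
Your proposal is correct and follows essentially the same route as the paper: the paper's proof likewise applies Proposition~\ref{prop:general-weight} with $h=1$, $q=k$, $a_{\max}=r$, uses disjointness to get $u_{\mathcal S}=r|\mathcal S|$, and groups subsets by cardinality to obtain~\eqref{eq:Sigma-k-sum}. Your explicit binomial algebra and the independent odd-parity sanity check merely fill in details the paper leaves implicit.
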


\begin{proof}

Apply Proposition~\ref{prop:general-weight} with $h=1$, $q=k$, $a_{\max}=r$ and note that $u_{\mathcal S}=r|\mathcal S|$ by disjointness. Grouping terms by $|\mathcal S|$ yields~\eqref{eq:Sigma-k-sum}.
\end{proof}


\begin{remark}
\label{rem:dyadic-2-plus}
For $\wt(P)$ to be of the dyadic form
\[
  \wt(P) = d\Bigl(2 + \sum_{j\ge 2}\frac{1}{2^j}\Bigr),
\]
which forces $2 \le \Sigma_k(r) < 2.5$, one checks (using
\eqref{eq:Sigma-k-sum}) that the only possible case is
\[
  r=3,\quad k=3,\quad
  \Sigma_3(3) = 2 + \frac14 + \frac1{16},
\]
i.e.,
\[
  \wt(P) = 2^{m-3}\Bigl(2 + \frac14 + \frac1{16}\Bigr).
\]

On the other hand, \eqref{eq:Sigma-k-sum} also describes precisely when
$\wt(P)<2d$, i.e.\ $\Sigma_k(r)<2$. Writing
\[
  \Sigma_k(r) = 2^{r-1}\Bigl(1-(1-2^{1-r})^k\Bigr),
\]
we get
\[
  \Sigma_1(r) = 1,\qquad
  \Sigma_2(r) = 2 - 2^{1-r}
               = 1 + \sum_{j=1}^{r-1}\frac{1}{2^j}
               < 2.
\]
For $r=2$, the formula simplifies to
\[
  \Sigma_k(2) = 2\bigl(1-(1/2)^k\bigr)
              = 2 - 2^{1-k}
              = 1 + \sum_{j=1}^{k-1}\frac{1}{2^j} < 2.
\]
\end{remark}

\begin{example}
For $r=3$ and $k=3$ disjoint cubics (e.g.\ $X_1X_2X_3$, $X_4X_5X_6$, $X_7X_8X_9$),
\[
\Sigma(P) = 2^{2}\big(1-(1-2^{-2})^3\big)
= 4\big(1-(3/4)^3\big)
= \tfrac{37}{16}.
\]
Thus $\wt(P) = (37/16)\,d$ with $d=2^{m-3}$.
\end{example}

\begin{lemma}[rank-$\ell$ and a degree drop]
\label{lem:affine-deg-r}
Let $r\ge2$ and $\ell\ge0$. Let $f_1,\dots,f_\ell,g\in\ft[X_0,\dots,X_{m-1}]$
be monomials with
\[
\deg(f_j)=r\ (1\le j\le\ell),\qquad
\deg(g)=r-1,
\]
and
\(
\ind(f_j)\cap\ind(f_{j'})=\emptyset\ (j\neq j'),\;
\ind(g)\cap\ind(f_j)=\emptyset\ (1\le j\le\ell).
\)
Set
\begin{equation}\label{eq:rank-l-deg-drop}
P(X)=\sum_{j=1}^{\ell} f_j(X)+g(X)
\end{equation}
and fix ambient degree $r$, so $d=2^{m-r}$. Then
\[
\wt(P)
=
2^{m-1}
\Bigl(
  1 - \bigl(1-2^{1-r}\bigr)^{\ell}\bigl(1-2^{2-r}\bigr)
\Bigr)
\Bigr).
\]
In particular, for $\ell=0$ one has $\wt(P)=2^{m-r+1}=2d$ (a single
monomial of degree $r-1$), and for $r=2$ one has $\wt(P)=2^{m-1}=2d$
for all $\ell\ge0$.
\end{lemma}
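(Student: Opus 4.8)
The plan is to apply Proposition~\ref{prop:general-weight} directly, taking $h=1$, $q=\ell+1$, and residual monomials $f_1,\dots,f_\ell,g$, so that $a_{\max}=r$ and $d=2^{m-r}$. The whole computation then reduces to evaluating the inclusion--exclusion sum \eqref{eq:Sigma-F}, where the only input needed is the union degree $u_{\mathcal S}$ for each nonempty $\mathcal S\subseteq\{1,\dots,\ell+1\}$. Because all supports are pairwise disjoint, the $\operatorname{lcm}$ of any collection of these monomials is simply their product, so $u_{\mathcal S}$ equals the sum of the degrees of the monomials indexed by $\mathcal S$. Concretely, if $\mathcal S$ contains $a$ of the degree-$r$ monomials and optionally $g$, then $u_{\mathcal S}=ar$ or $u_{\mathcal S}=ar+(r-1)$ according to whether $g\notin\mathcal S$ or $g\in\mathcal S$.

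First I would split the sum in \eqref{eq:Sigma-F} into the two families of subsets: those avoiding $g$ and those containing $g$. For the subsets with $g\notin\mathcal S$, indexing by $a=|\mathcal S|\in\{1,\dots,\ell\}$ and using $u_{\mathcal S}=ar$, the contribution is a binomial sum in $\binom{\ell}{a}(-2)^{a-1}2^{r-ar}$; pulling out the common factors and recognizing $\sum_{a\ge 1}\binom{\ell}{a}(-2^{1-r})^a=(1-2^{1-r})^\ell-1$, this collapses to $2^{r-1}\bigl(1-(1-2^{1-r})^\ell\bigr)$. This is exactly the normalized weight $\Sigma_\ell(r)$ of the disjoint $\ell$-sum from Lemma~\ref{lem:disjoint-k} via \eqref{eq:Sigma-k-sum}, which serves as a consistency check: discarding $g$ must reproduce the pure $\ell$-sum.

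Next I would evaluate the subsets with $g\in\mathcal S$. Writing $\mathcal S=\mathcal S'\cup\{g\}$ with $\mathcal S'\subseteq\{1,\dots,\ell\}$ arbitrary (now permitting $\mathcal S'=\varnothing$), we have $|\mathcal S|=|\mathcal S'|+1$ and $a_{\max}-u_{\mathcal S}=1-|\mathcal S'|\,r$, so the sign and power factors simplify and the contribution becomes $2\sum_{a=0}^\ell\binom{\ell}{a}(-2^{1-r})^a=2(1-2^{1-r})^\ell$. Adding the two parts gives $\Sigma(F)=2^{r-1}\bigl(1-(1-2^{1-r})^\ell\bigr)+2(1-2^{1-r})^\ell$; multiplying by $d=2^{m-r}$ and factoring out $2^{m-1}$ then yields the claimed $\wt(P)=2^{m-1}\bigl(1-(1-2^{1-r})^\ell(1-2^{2-r})\bigr)$, where the two terms carrying $(1-2^{1-r})^\ell$ combine through $-2^{m-1}+2^{m-r+1}=-2^{m-1}(1-2^{2-r})$.

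Finally I would record the degenerate cases as sanity checks: $\ell=0$ makes the product term $1\cdot(1-2^{2-r})$, giving $\wt(P)=2^{m-r+1}=2d$, the weight of a single degree-$(r-1)$ monomial; and $r=2$ makes the factor $1-2^{2-r}$ vanish, giving $\wt(P)=2^{m-1}=2d$ for every $\ell$. The computation is entirely mechanical once disjointness is used to linearize $u_{\mathcal S}$; the only step demanding care is the bookkeeping of the signs $(-2)^{|\mathcal S|-1}$ together with the clean factorization of the two binomial sums into powers of $(1-2^{1-r})$, so that the two contributions merge into the single compact product form rather than being left as an unsimplified difference of powers.
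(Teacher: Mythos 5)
Your proposal is correct and follows essentially the same route as the paper's own proof: applying Proposition~\ref{prop:general-weight} with $h=1$ and $q=\ell+1$, using disjointness to get $u_{\mathcal S}=ar+\varepsilon(r-1)$, splitting the inclusion--exclusion sum according to whether $g\in\mathcal S$, and evaluating the two binomial sums to $2^{r-1}\bigl(1-(1-2^{1-r})^{\ell}\bigr)$ and $2(1-2^{1-r})^{\ell}$ before recombining. The only cosmetic difference is that the paper dispatches $\ell=0$ as a separate base case, whereas your computation absorbs it and you verify it (together with $r=2$) as a sanity check.
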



\begin{lemma}[Complementary flip weight]
\label{lem:compl-flip}
Let $f,g\in\mathcal R_m$ be monomials with $\deg(f)=r$ and $\deg(g)=s$, and let
$j\in\{0,\dots,m-1\}$ satisfy
\[
\ind(g)\cap\bigl(\ind(f)\cup\{j\}\bigr)=\varnothing.
\]
Define
\[
P(X) \;=\; f(X) + (X_j+1)\,g(X),
\]
and set $d \triangleq 2^{m-r}$. Then
\begin{equation}
\label{eq:compl-flip-weight}
\wt\bigl(\ev(P)\bigr) \;=\; d\bigl(1+2^{r-s-1}\bigr).
\end{equation}
In particular, if $s=r$ then $\wt(\ev(P))=\tfrac32 d$, and if $s=r-1$ then
$\wt(\ev(P))=2d$.
\end{lemma}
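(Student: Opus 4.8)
The plan is to apply Proposition~\ref{prop:general-weight} directly to the two-term decomposition of $P$. The key observation is that $(X_j+1)g(X) = X_j g(X) + g(X)$, so writing $P = f + X_j g + g$ expresses $P$ as a sum of three monomials, but it is cleaner to treat it as a product $h\cdot F$ in the sense of~\eqref{eq:P-factorised}. First I would note that $X_j+1$ is an affine factor, not a monomial, so the most direct route is to expand $P = f + X_j g + g$ and regard it as $h=1$ times a sum of $q=3$ residual monomials $f_1=f$, $f_2=X_j g$, $f_3=g$, with $a_{\max}=r$ (since $\deg(f)=r\ge \deg(X_jg)=s+1$ and $\deg(g)=s$, assuming $s\le r-1$; I will need to handle $s=r$ separately where $a_{\max}=r$ still holds because $\deg(X_jg)=r+1$ could exceed $r$, so care is needed here).

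Next I would compute all the union degrees $u_{\mathcal S}$ over the seven nonempty subsets $\mathcal S\subseteq\{1,2,3\}$, using the disjointness hypothesis $\ind(g)\cap(\ind(f)\cup\{j\})=\varnothing$. The disjointness guarantees that $\operatorname{lcm}$ of any collection is just the product over the distinct variables, so the union degree equals the size of the union of the index sets. Concretely, $\ind(f)$, $\{j\}$, and $\ind(g)$ are pairwise disjoint, which makes every $u_{\mathcal S}$ easy to read off: for instance $u_{\{2,3\}}=\deg(\operatorname{lcm}(X_jg,g))=s+1$ because $\ind(g)\subseteq\ind(X_jg)$, while $u_{\{1,2\}}=r+s+1$ and $u_{\{1,3\}}=r+s$, and $u_{\{1,2,3\}}=r+s+1$. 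Plugging these into~\eqref{eq:Sigma-F} with the signs $(-2)^{|\mathcal S|-1}$ and the factor $2^{a_{\max}-u_{\mathcal S}}$, I would collect terms and simplify the resulting finite sum to obtain $\Sigma(F)=1+2^{r-s-1}$, whence $\wt(P)=d\,\Sigma(F)=d(1+2^{r-s-1})$.

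The main obstacle I anticipate is the bookkeeping around $a_{\max}$ when $s=r$. In that case $\deg(X_jg)=r+1>r$, so the ambient degree of the three-term expansion is $r+1$ rather than $r$, and the normalization $d=2^{m-r}$ stated in the lemma no longer matches $a_{\max}$; the term $X_jg$ has strictly larger degree than $f$. I would resolve this by observing that the expansion $f + (X_j+1)g$ is better handled by recognizing that $(X_j+1)g$ and $X_jg$ together describe the indicator of $\{X_j=0\}$ restricted to $\supp(g)$, so $\ev((X_j+1)g)$ has weight $2^{m-s-1}$ supported disjointly (in the $X_j$ coordinate) from a natural half of $\supp(f)$. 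An alternative and perhaps cleaner argument is a direct support count: by disjointness, $\ev(f)$ and $\ev((X_j+1)g)$ live on independent variable blocks, so the weight of their sum over $\ft$ equals $|f|+|g'|-2|f\cdot g'|$ where $g'=(X_j+1)g$; here $|f|=2^{m-r}$, $|g'|=2^{m-s-1}$, and the product support has size $2^{m-r-s-1}$ by independence, giving $\wt(P)=2^{m-r}+2^{m-s-1}-2\cdot 2^{m-r-s-1}=d(1+2^{r-s-1})$ after factoring out $d=2^{m-r}$. I would present the inclusion--exclusion support count as the primary proof since it sidesteps the $a_{\max}$ subtlety entirely, and then verify the two stated special cases $s=r$ and $s=r-1$ by substitution.
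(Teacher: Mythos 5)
Your instinct to reduce everything to a direct support count is exactly the paper's approach (its proof is a two-line disjoint-support argument), but your proposal gets the key geometric configuration wrong, and this breaks both of your routes. You assume that $\ind(f)$, $\{j\}$, and $\ind(g)$ are \emph{pairwise} disjoint, i.e.\ $j\notin\ind(f)$. The intended (and necessary) configuration is the opposite: $X_j$ must be one of the variables of $f$ — this is the whole point of the ``complementary flip.'' The stated hypothesis only constrains $\ind(g)$, but the paper's proof makes the requirement explicit: the supports of $f$ and $(X_j+1)g$ are disjoint \emph{because} $f=1$ forces $X_j=1$ while $(X_j+1)g\neq 0$ forces $X_j=0$, so the cross term vanishes and $\wt(P)=2^{m-r}+2^{m-s-1}=d(1+2^{r-s-1})$. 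Under your assumption $j\notin\ind(f)$ the cross term does \emph{not} vanish, and your own inclusion--exclusion line is then arithmetically false:
\[
2^{m-r}+2^{m-s-1}-2\cdot 2^{m-r-s-1}
= d\bigl(1+2^{r-s-1}-2^{-s}\bigr)
\;\neq\; d\bigl(1+2^{r-s-1}\bigr),
\]
since $2\cdot 2^{m-r-s-1}=2^{m-r-s}$ does not cancel. The same error propagates through your Proposition~\ref{prop:general-weight} route: with the union degrees you list ($u_{\{1,2\}}=r+s+1$, $u_{\{1,3\}}=r+s$, $u_{\{1,2,3\}}=r+s+1$, all predicated on $j\notin\ind(f)$), summing \eqref{eq:Sigma-F} gives $\Sigma(F)=1+2^{r-s-1}-2^{-s}$, not the $1+2^{r-s-1}$ you assert without computing. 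A concrete counterexample to the formula in your configuration: $m=2$, $f=X_0$, $g=1$, $j=1$ gives $P=X_0+X_1+1$ of weight $2$, while \eqref{eq:compl-flip-weight} predicts $d(1+2^{r-s-1})=4$.

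With $j\in\ind(f)$ everything you worried about evaporates: the product $f\cdot(X_j+1)g$ contains the factor $X_j(X_j+1)=0$ in $\Rm$, so the supports are disjoint for \emph{all} $s$, including $s=r$; your $a_{\max}$ concern about $\deg(X_jg)=s+1$ possibly exceeding $r$ is a side issue that the support-disjointness argument (as in the paper) bypasses uniformly. So the fix is small but essential: replace your pairwise-disjointness assumption by $j\in\ind(f)$ (and $j\notin\ind(g)$, which the stated hypothesis does give), observe the cross term is zero, and the two special cases follow by substitution as you planned.
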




\begin{lemma}[Shared 3-term kernels]
\label{lem:shared-3term-weight}
Let $r\ge3$ and let $h$ be a monomial of degree $\deg(h)=r-3$ whose
variables are disjoint from $\{X_1,\dots,X_7\}$.  Set $d=2^{m-r}$ and
ambient degree $r=\deg(h)+3$. 
Define
\[
P_B(X)
  = h(X)\Bigl(
      \underbrace{X_1X_2X_3}_{f_1}
    + \underbrace{X_2X_4X_5}_{f_2}
    + \underbrace{X_3X_4X_6}_{f_3}
    \Bigr),
\]
\[
P_C(X)
  = h(X)\Bigl(
      \underbrace{X_1X_2X_3}_{g_1}
    + \underbrace{X_3X_4X_5}_{g_2}
    + \underbrace{X_4X_6X_7}_{g_3}
    \Bigr),
\]
and set $Q_B=f_1+f_2+f_3$, $Q_C=g_1+g_2+g_3$. Then
\[
\wt(P_B)=\wt(P_C)=2^{m-r+1}=2d,
\quad
\Sigma(P_B)=\Sigma(P_C)=2.
\]
\end{lemma}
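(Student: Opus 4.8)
The plan is to apply the General Weight Formula (Proposition~\ref{prop:general-weight}) directly to each kernel. Since $\Sigma(F)$ depends only on the residual monomials and not on the head $h$ (as stressed in the kernel-based description), I would discard $h$ and treat $Q_B=f_1+f_2+f_3$ and $Q_C=g_1+g_2+g_3$ as degree-$3$ polynomials, so that $q=3$ and $a_{\max}=3$ in both cases. The whole computation then reduces to tabulating the union degrees $u_{\mathcal S}=\deg\bigl(\operatorname{lcm}\{f_i:i\in\mathcal S\}\bigr)$ over the seven nonempty subsets $\mathcal S\subseteq\{1,2,3\}$ and evaluating the alternating sum $\sum_{\mathcal S}(-2)^{|\mathcal S|-1}2^{\,a_{\max}-u_{\mathcal S}}$ from~\eqref{eq:Sigma-F}.

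For $P_B$ I would first record the pairwise intersections $\ind(f_1)\cap\ind(f_2)=\{2\}$, $\ind(f_1)\cap\ind(f_3)=\{3\}$, $\ind(f_2)\cap\ind(f_3)=\{4\}$, with empty triple intersection. This gives $u_{\mathcal S}=3$ for each singleton, $u_{\mathcal S}=5$ for each pair, and $u_{\{1,2,3\}}=6$. Substituting yields $\Sigma(P_B)=3\cdot 1-3\cdot\tfrac12+\tfrac12=2$, hence $\wt(P_B)=d\,\Sigma(P_B)=2d=2^{m-r+1}$.

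For $P_C$ the overlap pattern is genuinely different: $g_1$ and $g_3$ are disjoint, so the three monomials overlap in a chain rather than a cycle. One finds $u_{\{1,2\}}=u_{\{2,3\}}=5$ but $u_{\{1,3\}}=6$, and $u_{\{1,2,3\}}=7$. The individual inclusion--exclusion contributions therefore differ from those of $P_B$ (the pairs now give $-\tfrac12,-\tfrac14,-\tfrac12$ and the triple gives $+\tfrac14$), yet the signed sum again telescopes to $\Sigma(P_C)=3-\tfrac54+\tfrac14=2$, so $\wt(P_C)=2d$ as well.

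The arithmetic is routine once the framework is in place; the only point that demands care---and the actual content of the statement---is that two combinatorially inequivalent overlap graphs (the ``triangle'' of $Q_B$ versus the ``path'' of $Q_C$) yield the same normalized weight. I would emphasize that this coincidence is invisible at the level of individual terms and emerges only after the signed cancellation, which is precisely why Proposition~\ref{prop:general-weight} is the efficient instrument here, in place of a direct support count. A natural extension, worth a concluding remark, is that any three cubics whose pairwise intersections each have size $1$ and whose triple intersection is empty will give $\Sigma=2$ regardless of whether the sharing forms a cycle or a chain.
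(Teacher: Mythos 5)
Your proof is correct and takes essentially the paper's approach: both tabulate the union degrees via Proposition~\ref{prop:general-weight} ($u=3$ for singletons; $5,5,5$ and $6$ for $Q_B$; $5,5,6$ and $7$ for $Q_C$) and evaluate the alternating sum to get $\Sigma=2$, the only cosmetic difference being that the paper computes the kernels with $h\equiv 1$ and then attaches the head via Lemma~\ref{lem:nesting}, whereas you absorb $h$ directly into Proposition~\ref{prop:general-weight} using the head-invariance of $\Sigma(F)$, which is legitimate here since $\ind(h)$ is disjoint from $\{X_1,\dots,X_7\}$ by hypothesis. One small caution on your closing remark: requiring all three pairwise intersections to have size $1$ (with empty triple intersection) forces the triangle pattern, so as stated it does not cover the chain $Q_C$, which has one empty pairwise intersection --- though, as your computation shows, both patterns do yield $\Sigma=2$.
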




\subsection{Nesting (weight scaling)}

\begin{lemma}[Nesting / weight-scaling]
\label{lem:nesting}
Let $h$ be a monomial of degree $s$ and $Q$ a polynomial whose variables are disjoint from those of $h$, with maximum monomial degree $t$. Fix an ambient degree $r\ge s+t$ and define
\[
P = h\cdot Q.
\]
Let $d_Q=2^{m-t}$ and $d=2^{m-r}$, and write
\[
\Sigma(Q) \triangleq \frac{\wt(Q)}{d_Q},\qquad
\Sigma(P) \triangleq \frac{\wt(P)}{d}.
\]
Then
\begin{equation}
\Sigma(P) = 2^{\,r-(s+t)}\,\Sigma(Q).
\label{eq:nesting}
\end{equation}
In particular, if $r=s+t$ then $\Sigma(P)=\Sigma(Q)$.
\end{lemma}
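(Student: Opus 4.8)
The plan is to prove the Nesting/weight-scaling lemma directly from the structure of the evaluation map, exploiting the disjointness of the variables of $h$ and $Q$. Since $h$ is a monomial of degree $s$ and the variables of $Q$ are disjoint from those of $h$, the product $P = h\cdot Q$ has a support that factors as a Cartesian product: a point $x\in\{0,1\}^m$ satisfies $P(x)=1$ precisely when $h$ evaluates to $1$ (forcing the $s$ variables of $h$ to be $1$) and simultaneously $Q$ evaluates to $1$ on the variables it depends on. The remaining coordinates are free. This product structure should give a clean multiplicative relation between $|P|$ and $|Q|$.

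More concretely, first I would partition the $m$ variables into three groups: the $s$ variables in $\ind(h)$, the variables appearing in $Q$, and the rest. Because $h$ forces its $s$ variables to be fixed at $1$, the constraint $h(x)=1$ contributes a factor $2^{0}$ on those coordinates (they are pinned) while leaving all other coordinates unconstrained by $h$. On the variables of $Q$, the condition $Q(x)=1$ holds on exactly $\wt(Q)$ of the $\{0,1\}^m$ points when counted in the ambient space of $Q$; translating this count to the full cube $\{0,1\}^m$ and accounting for the variables free in both $h$ and $Q$, I would assemble the identity $\wt(P) = 2^{-s}\wt(Q)$ when weights are measured in the same ambient dimension $m$. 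The cleanest route is to note $\wt(\ev(h\cdot Q)) = \wt(\ev(h))\cdot\wt(\ev(Q))/2^{m}$ by independence of the two disjoint variable blocks, and then substitute $\wt(\ev(h)) = 2^{m-s}$ from the monomial-weight formula \eqref{eq:monom-weight}.

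With $\wt(P) = 2^{m-s}\,\wt(Q)/2^{m} = 2^{-s}\,\wt(Q)$ in hand, the rest is bookkeeping with the two normalizations. I would substitute the definitions $\Sigma(Q)=\wt(Q)/d_Q$ with $d_Q = 2^{m-t}$ and $\Sigma(P)=\wt(P)/d$ with $d = 2^{m-r}$, giving
\[
\Sigma(P) = \frac{\wt(P)}{2^{m-r}}
= \frac{2^{-s}\wt(Q)}{2^{m-r}}
= 2^{\,r-s}\cdot\frac{\wt(Q)}{2^{m}}
= 2^{\,r-s-t}\cdot\frac{\wt(Q)}{2^{m-t}},
\]
and recognizing the final fraction as $\Sigma(Q)$ yields $\Sigma(P) = 2^{\,r-(s+t)}\Sigma(Q)$, which is exactly \eqref{eq:nesting}. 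The special case $r=s+t$ then follows immediately since the prefactor becomes $2^{0}=1$.

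I expect the main obstacle to be justifying the product-of-supports factorization rigorously, namely the claim $\wt(\ev(h\cdot Q)) = \wt(\ev(h))\cdot\wt(\ev(Q))/2^{m}$. This hinges on the disjointness hypothesis: because $\ind(h)$ and the variable set of $Q$ share no indices, the two Boolean functions depend on disjoint coordinate blocks, so their joint satisfying set is a Cartesian product and the counts multiply. One must be careful that $\ev$ is taken in the common ambient dimension $m$ for all three objects, so that the ``free'' coordinates are accounted for consistently; the division by $2^{m}$ corrects for double-counting the free coordinates shared by both factorizations. Once this independence is stated cleanly, the remainder is routine algebra with the normalization constants.
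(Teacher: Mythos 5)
Your proof is correct and takes essentially the same route as the paper: the paper also observes that variable-disjointness makes $hQ=1$ exactly when $h=1$ and $Q=1$, deduces $\wt(P)=\wt(Q)/2^{s}$, and finishes with the same normalization algebra. Your intermediate independence identity $\wt(\ev(hQ))=\wt(\ev(h))\,\wt(\ev(Q))/2^{m}$ is just a slightly more explicit packaging of that same Cartesian-product argument.
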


\begin{proof}
Because $h$ and $Q$ are variable-disjoint, $hQ$ is $1$ exactly when $h=1$ and $Q=1$. Thus $\wt(P)=\wt(Q)/2^s$. Normalizing,
\[
\Sigma(P) = \frac{\wt(Q)/2^s}{2^{m-r}}
= \Sigma(Q)\cdot 2^{\,r-(s+t)}.
\]
\end{proof}




\begin{corollary}[Nesting of $2d$ kernels]
\label{cor:nested-2d}
Let $Q(X)$ be a polynomial of maximum
degree $t\ge1$ such that
\[
\wt(\ev(Q)) = 2^{m-t+1},
\qquad\text{i.e.}\qquad
\Sigma(Q)\triangleq\frac{\wt(Q)}{2^{m-t}}=2.
\]
This applies in the following two cases (among others):
\begin{enumerate}
  \item $Q$ is a rank-$\ell$ degree–drop kernel as in
        Lemma~\textup{\ref{lem:affine-deg-r}} (with its parameter
        $r=t$) and either $\ell=0$ or $t=2$; then
        Lemma~\textup{\ref{lem:affine-deg-r}} gives
        $\wt(\ev(Q))=2^{m-t+1}=2\cdot 2^{m-t}$.
  \item $Q$ is a complementary–flip kernel as in
        Lemma~\textup{\ref{lem:compl-flip}} with
        $\deg(f)=t$ and $\deg(g)=t-1$; then
        Lemma~\textup{\ref{lem:compl-flip}} gives
        $\wt(\ev(Q))=2^{m-t}+2^{m-(t-1)-1}=2^{m-t+1}
        =2\cdot 2^{m-t}$.
\end{enumerate}

Let $h$ be a monomial of degree $s$ whose variables are disjoint from
those of $Q$, and set
\[
r\triangleq s+t,\qquad d\triangleq 2^{m-r},\qquad
P(X)\triangleq h(X)\,Q(X).
\]
Then
\[
\wt\bigl(\ev(P)\bigr)=2d=2^{m-r+1}.
\]
\end{corollary}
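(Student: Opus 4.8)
The plan is to recognize this as a direct specialization of the Nesting Lemma~\ref{lem:nesting}, so the bulk of the work reduces to verifying that each of the two listed cases genuinely satisfies the normalization hypothesis $\Sigma(Q)=2$. First I would carry out the main deduction: since $h$ is a monomial of degree $s$ whose variables are disjoint from those of $Q$, and $Q$ has maximum monomial degree $t$, Lemma~\ref{lem:nesting} applies with ambient degree $r=s+t$. Because $r$ is taken equal to $s+t$ exactly, the scaling factor $2^{\,r-(s+t)}$ collapses to $1$, giving $\Sigma(P)=\Sigma(Q)$. Substituting $\Sigma(Q)=2$ and $d=2^{m-r}$ then yields $\wt(\ev(P))=d\,\Sigma(P)=2d=2^{m-r+1}$, which is the claim.

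The remaining step is to check the two sufficient conditions. For case (1), I would invoke Lemma~\ref{lem:affine-deg-r} with its internal parameter $r$ identified with $t$ here. When $\ell=0$ the lemma reduces $Q$ to a single monomial of degree $t-1$, whose weight is $2^{m-(t-1)}=2\cdot 2^{m-t}$; when $t=2$ the lemma gives $\wt=2^{m-1}=2\cdot 2^{m-2}$ for every $\ell$. In both situations $\Sigma(Q)=\wt(Q)/2^{m-t}=2$, and the maximum monomial degree of $Q$ is indeed $t$, as required by Lemma~\ref{lem:nesting}. For case (2), I would apply Lemma~\ref{lem:compl-flip} with the degree pair $(r,s)$ there set to $(t,t-1)$: the formula $d\bigl(1+2^{\,r-s-1}\bigr)$ becomes $2^{m-t}\bigl(1+2^{0}\bigr)=2^{m-t+1}$, so again $\Sigma(Q)=2$.

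Since the whole argument is an assembly of already-established identities, there is no serious analytic obstacle; the only point demanding care is the bookkeeping of parameter names across the three lemmas---in particular, ensuring that the ambient degree $t$ used to normalize $Q$ (via $d_Q=2^{m-t}$) matches the $r$-parameter of Lemmas~\ref{lem:affine-deg-r} and~\ref{lem:compl-flip}, and that the disjointness between $\supp(h)$ and $\supp(Q)$ required by Lemma~\ref{lem:nesting} is precisely the hypothesis imposed in the statement. I would also note that the phrase ``among others'' signals that \emph{any} kernel with $\Sigma(Q)=2$ (for example those furnished by Lemma~\ref{lem:shared-3term-weight}) feeds into the same conclusion, so the corollary is really a statement about nesting an arbitrary normalized-weight-$2$ kernel and could be phrased that way without reference to the specific cases.
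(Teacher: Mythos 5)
Your proposal is correct and takes essentially the same route as the paper's proof: apply Lemma~\ref{lem:nesting} with ambient degree $r=s+t$ so the scaling factor $2^{\,r-(s+t)}$ equals $1$, giving $\Sigma(P)=\Sigma(Q)=2$ and hence $\wt(\ev(P))=2d$, with the two cases verified exactly as in the paper via Lemma~\ref{lem:affine-deg-r} (for $\ell=0$ or $t=2$) and Lemma~\ref{lem:compl-flip} (with degrees $t$ and $t-1$). Your closing remark that any kernel with $\Sigma(Q)=2$, such as those of Lemma~\ref{lem:shared-3term-weight}, yields the same conclusion is consistent with the corollary's ``among others'' phrasing and adds nothing that conflicts with the paper's argument.
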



\section{Enumeration of Templates' Codewords}
\label{sec:enumeration}

In this section, we give closed-form enumerator formulas, in the spirit of the minimum-weight expression, for the structural templates introduced earlier.

\subsection{General orbit-counting scheme for weight templates}
\label{subsec:general-counting}

All our counting formulas follow the same pattern.  We start from a \emph{seed}
\(
P(X) = h(X)\,Q(X),
\)
where $h$ is a monomial (\emph{head}) and $Q$ is a polynomial
(\emph{kernel}).  Write
\[
Q(X) = Q_{\max}(X) + R(X),
\]
where $Q_{\max}$ is the sum of all highest-degree monomials of $Q$ and
$R$ collects the lower-degree terms:
\[
Q_{\max}(X) = \sum_{i=1}^{\mu} t_i(X),\quad
\deg(t_i)=r_{\max},\quad
\deg(R)<r_{\max},
\]
with all $t_i$ supported outside $\ind(h)$.  The degree of $P$ is
\[
r = \deg(h)+r_{\max},\qquad d=2^{m-r}.
\]

We work in the \emph{head stabilizer}
\[
G_h \;\triangleq\; \mathrm{Stab}_{\LTA(m,2)}(h)
  \;=\; \{\gamma\in\LTA(m,2) : \gamma\cdot h = h\}.
\]
Thus, $h$ is fixed but the kernel $Q$ may move under $G_h$.  For a seed
$P=hQ$ the head-fixed orbit
\[
\mathcal{O}(P) \;\triangleq\; G_h\cdot P
\]
consists of all polynomials of the form $h\,(\gamma\cdot Q)$ with
$\gamma\in G_h$. 

\paragraph{Head contribution.}
For a monomial $h$ with partition weight $|\lambda_h|$, we have
\[
|\LTA(m,2)\cdot h| = 2^{\deg(h)+|\lambda_h|},
\]
and the same exponent counts the free LTA parameters on the variables
in $\ind(h)$ inside $G_h$.
In the special case $h\equiv1$, we have $\deg(h)=|\lambda_h|=0$ and
$G_h=\LTA(m,2)$.

\paragraph{Head contribution.}
For a monomial $h$ with partition weight $|\lambda_h|$, the usual LTA
orbit formula gives
\[
|\LTA(m,2)\cdot h| = 2^{\deg(h)+|\lambda_h|}.
\]
In the special case $h\equiv1$ we have $\deg(h)=|\lambda_h|=0$ and
$G_h=\LTA(m,2)$.

\paragraph{Kernel decomposition: $Q_{\max}$ and $R$.}
The kernel $Q$ may contain lower-degree monomials $R$ that are:
\begin{itemize}[leftmargin=*, itemsep=0pt]
  \item Fixed or template-determined: 
        in many low-weight templates (e.g.\ complementary flip) $R$ is uniquely determined once the head $h$
        and the highest-degree tails $t_i$ are fixed.  In this case, $R$
        does not introduce extra parameters;  for any
        $\gamma\in G_h$, the image $\gamma\cdot R$ is a
        deterministic function of $\gamma\cdot Q_{\max}$.
  \item Independent (disjoint variables): 
        if $\ind(R)$ is disjoint from $\ind(h)\cup\bigcup_i\ind(t_i)$,
        the monomials of $R$ form an additional block of tails with their
        own LTA degrees of freedom and possible collisions.  Their
        contribution factors multiplicatively into the orbit exponent
        (see below).
  \item Partially shared:
        if some variables in $R$ are shared with $h$ or with the $t_i$,
        the tails of $R$ are simply added to the global tail set, and
        their effect is absorbed into the same $\beta_Q$ and $\alpha_Q$
        parameters defined below.
\end{itemize}

\paragraph{Tail contribution and stabilizer (collisions)}
Write the kernel as a sum of monomials
\[
Q = \sum_{j=1}^{\nu} u_j,\quad \deg(u_j)\le r_{\max},\quad
\ind(u_j)\cap\ind(h)=\varnothing.
\]
The corresponding degree-$r$ monomials in $P$ are
\[
f_j \;=\; h\,u_j,\qquad 1\le j\le\nu.
\]
Knowing that the affine action of $\LTA(m,2)$
on monomials is multiplicative \cite{rowshan2025weight}, for every tail $u_j$ and
every $\gamma\in G_h$,
\[
\gamma\cdot(h\,u_j)
  = (\gamma\cdot h)(\gamma\cdot u_j)
  = h\,(\gamma\cdot u_j).
\]
Thus, the head $h$ is fixed and the group acts only on the tails
$u_j$. In particular, the orbit of $f_j$ under $G_h$ is in bijection
with the \emph{tail–orbit}
\[
\mathcal{O}_j \;\triangleq\; G_h\cdot u_j.
\]

The degree-$r$ and lower-degree parts of $P$ generated by the tails are
described by the Minkowski sum
\[
\mathcal{S}
  \;\triangleq\;
  \mathcal{O}_1 + \cdots + \mathcal{O}_\nu
  = \{a_1+\cdots+a_\nu : a_j\in\mathcal{O}_j\}.
\]
Up to the common factor $h$, the degree-$r$ term of $\gamma\cdot P$ is $h(a_1+\cdots+a_\nu)$ for some $a_j\in\mathcal{O}_j$, so every collision among degree-$r$ monomials of $P$ is induced by a collision in $\mathcal{S}$. 
For each pair $(i,j)$ with $1\le i<j\le\nu$, we define the
\emph{pairwise collision exponent} $\alpha_{u_i,u_j}$ by \cite[Prop. 1]{vlad1.5d}
\[
|\mathcal{O}_i+\mathcal{O}_j|
  \;=\;
  \frac{|\mathcal{O}_i|\,|\mathcal{O}_j|}{2^{\alpha_{u_i,u_j}}}.
\]
Note that $\alpha_{u_i,u_j}$ may be nonzero even if $\ind(u_i)$ and
$\ind(u_j)$ are disjoint: in Type–II structures
$P=h\sum f_i$ with $\deg(h)=r-2$ and disjoint quadratic tails
$u_i=f_i/h$, the common head $h$ and the lower–triangular constraints
still create collisions (this is exactly the collision parameter in  \cite[Def. 8]{vlad1.5d}).

For a given kernel $Q$, we define the total collision exponent and, under the standard assumption that all higher–order collisions are
induced by these pairwise overlaps, we obtain
\[
\alpha_Q \;\triangleq\; \sum_{1\le i<j\le\nu} \alpha_{u_i,u_j}, \quad |\mathcal{S}|
  \;=\;
  \frac{\prod_{j=1}^{\nu} |\mathcal{O}_j|}{2^{\alpha_Q}}.
\]


\paragraph{Tail and linear degrees of freedom on the kernel.}
Let $Q=\sum_{j=1}^\nu u_j$ be written as a sum of tails $u_j$
(monomials of degree $\le r_{\max}$, disjoint from $\ind(h)$).  For each
tail $u_j$ we have its usual single-tail orbit size
$2^{\deg(u_j)+|\lambda_{u_j}|}$, so $\deg(u_j)+|\lambda_{u_j}|$ is the
number of LTA parameters we would obtain if $u_j$ were acting alone.

When several tails are present simultaneously, the admissible
lower--triangular coefficients form a linear subspace $V_Q$ of the
ambient parameter space.  Its dimension exceeds
$\sum_j (\deg(u_j)+|\lambda_{u_j}|)$ by a nonnegative “mixing” term
$\beta_Q^{\rm mix}$, which accounts for
\begin{itemize}
  \item rows shared by several tails (variables appearing in more than
        one $u_j$), and
  \item rows belonging only to lower-degree tails (e.g.\ linear factors)
        that introduce additional joint degrees of freedom.
\end{itemize}
We therefore write
\[
\beta_Q
  \;\triangleq\;
  \sum_{j=1}^\nu \bigl(\deg(u_j)+|\lambda_{u_j}|\bigr)
  \;+\;
  \beta_Q^{\rm mix},
\]
so that $2^{\beta_Q}$ is exactly the number of kernel-side LTA degrees
of freedom that preserve the template $Q$.

\paragraph{Master orbit-size formula.}
Let $G_h = \mathrm{Stab}_{\LTA(m,2)}(h)$ be the head stabilizer and
$\mathcal{O}(P)=G_h\cdot P$ the head-fixed orbit of the seed
$P=hQ$.  The admissible LTA parameters in $G_h$ that send $P$ to a
polynomial with the same head $h$ and the same kernel template $Q$
form an affine space of size
\[
2^{\deg(h)+|\lambda_h|+\beta_Q},
\]
where $\deg(h)+|\lambda_h|$ counts the head-side degrees of freedom and
$\beta_Q$ the kernel-side ones.  Collisions among these parameters are
accounted for by the collision exponent $\alpha_Q$ (defined via the
Minkowski sum of tail–orbits), so the orbit size is
\[
\boxed{
\bigl|G_h\cdot P\bigr|
  \;=\;
  2^{\,\deg(h)+|\lambda_h|+\beta_Q-\alpha_Q}.
}
\]
The special case
$h\equiv1$ is obtained by setting $\deg(h)=|\lambda_h|=0$ and
$G_h=\LTA(m,2)$.

\subsection{Enumeration/counting formulae}
In this template, we have $h=1$.
\begin{proposition}[Enumeration of disjoint $k$-sum codewords]
\label{prop:k-disjoint-enum-decreasing-final}
Let $\C(\I)$ be a decreasing monomial code and $w_{k,r}$ as in
\eqref{eq:weight-disjoint-k}.  Let $\mathcal F_{k,r}(\I)$ be the set of
ordered $k$-tuples $F=(f_1,\dots,f_k)$ of degree-$r$ monomials in $\I$
with pairwise disjoint supports $\ind(f_i)\cap\ind(f_j)=\varnothing$
for $i\neq j$. Then the number of weight-$w_{k,r}$ codewords arising
from disjoint $k$-sums is
\begin{equation}
  \label{eq:A-k-disjoint-decreasing-final}
  A_{w_{k,r}}^{(k)}(\I)
\;=\;
\sum_{\mathclap{F\in\mathcal F_{k,r}(\I)}}
  \prod_{i=1}^k \bigl|\Alow_{f_i}\cdot f_i\bigr|
\;=\;
\sum_{\mathclap{F\in\mathcal F_{k,r}(\I)}}
  2^{\,k r + \sum_{i=1}^k |\lambda_{f_i}|}.
\end{equation}
\end{proposition}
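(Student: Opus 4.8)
The plan is to realize the claimed number as the cardinality of a disjoint union of $\Alow$-orbits, and to reduce each orbit size to the single-monomial count \eqref{eq:orbit-single-again}. First I would fix a seed $P_F=\sum_{i=1}^k f_i$ for $F=(f_1,\dots,f_k)\in\mathcal F_{k,r}(\I)$. Since $\I$ is decreasing, $\Alow$ sits inside $\operatorname{Aut}(\C(\I))$ and preserves Hamming weight, while Lemma~\ref{lem:disjoint-k} gives $\wt(P_F)=w_{k,r}$; hence every element of the orbit $\Alow\cdot P_F$ is a weight-$w_{k,r}$ codeword. The objective is then to prove that the codewords ``arising from disjoint $k$-sums'' are exactly $\bigsqcup_F \Alow\cdot P_F$, that distinct seeds give disjoint orbits, and that $|\Alow\cdot P_F|=\prod_i|\Alow\cdot f_i|$.

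The first technical ingredient is an independent-realizability step: because the supports $\ind(f_i)$ are pairwise disjoint, the $\Alow$ parameters that govern $\gamma\cdot f_i$ lie in the rows indexed by $\ind(f_i)$, and these are disjoint blocks. Thus for any choice $g_i\in\Alow\cdot f_i$ one can assemble a single $\gamma\in\Alow$ row-by-row with $\gamma\cdot f_i=g_i$ for all $i$. Consequently the map $\gamma\mapsto(\gamma\cdot f_1,\dots,\gamma\cdot f_k)$ onto $\prod_i(\Alow\cdot f_i)$ is surjective, $\Alow\cdot P_F=\{\sum_i g_i : g_i\in\Alow\cdot f_i\}$, and we obtain the bound $|\Alow\cdot P_F|\le\prod_i|\Alow\cdot f_i|$.

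The matching lower bound is the crux, and I expect it to be the main obstacle: I must show the summation map $(g_1,\dots,g_k)\mapsto\sum_i g_i$ is injective on $\prod_i(\Alow\cdot f_i)$, equivalently $\Stab_{\Alow}(P_F)=\bigcap_i\Stab_{\Alow}(f_i)$. The difficulty is that $\gamma\cdot f_i$ is generally not a monomial and, via the lower-triangular action, can carry terms on variables belonging to other blocks, so individual monomials of different $g_i$ may interfere. I would argue by peeling the block whose top variable $t_k=\max\ind(f_k)$ is globally largest: lower-triangularity forces every monomial containing $x_{t_k}$ in $\gamma\cdot P_F$ to originate from $\gamma\cdot f_k$, so $\gamma\cdot P_F=P_F$ pins down the $x_{t_k}$-part; cancelling $x_{t_k}$ and using the structural fact that an $\Alow$-image of a disjoint monomial cannot leave surviving low-degree debris forces $\gamma\cdot f_k=f_k$, after which the remaining blocks follow by induction. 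Alternatively, one may invoke the master orbit-size formula of Section~\ref{subsec:general-counting} with $h\equiv1$: for pairwise disjoint degree-$r$ tails there are no shared rows and no pairwise collisions, so $\beta_Q^{\rm mix}=\alpha_Q=0$ and $|\Alow\cdot P_F|=2^{\sum_i(\deg(f_i)+|\lambda_{f_i}|)}$.

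Finally I would assemble the count. Each $g_i$ determines $f_i$ as the unique monomial in its orbit, since $\gamma\cdot f$ always has $f$ as leading term and therefore is a monomial only when it equals $f$; combined with injectivity, the unordered seed is recovered from the codeword, so orbits attached to distinct unordered tuples are disjoint and together cover every disjoint-$k$-sum codeword. Summing $|\Alow\cdot P_F|=\prod_i|\Alow\cdot f_i|$ and substituting $|\Alow\cdot f_i|=2^{\deg(f_i)+|\lambda_{f_i}|}=2^{r+|\lambda_{f_i}|}$ from \eqref{eq:orbit-single-again} yields the closed form $2^{kr+\sum_i|\lambda_{f_i}|}$. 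One bookkeeping point I would state explicitly: since $P_F$ is invariant under reordering the tuple, a sum over \emph{ordered} tuples counts each codeword $k!$ times, so the ordered sum in \eqref{eq:A-k-disjoint-decreasing-final} must be read with this convention (equivalently, restrict to unordered tuples, or divide by $k!$, to count distinct codewords).
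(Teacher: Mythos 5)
Your proposal follows the same architecture as the paper's proof: decompose the disjoint-$k$-sum codewords into orbits of the seeds $P_F=\sum_i f_i$, use block-disjointness of the $\ind(f_i)$ to identify the acting parameters with the direct product $\prod_i \Alow_{f_i}$, establish that the product of single-monomial orbit sizes is attained by proving the summation map injective (trivial stabilizer), and sum $2^{kr+\sum_i|\lambda_{f_i}|}$ over seeds. Where you differ is in the treatment of the crux. The paper disposes of injectivity in one line --- ``the expansion into monomials with support contained in each $\ind(f_i)$ is unique'' --- which is not literally true: as you correctly observe, $\gamma\cdot f_i$ is a product of lower-triangularly substituted affine forms and can carry monomials supported on \emph{other} blocks' variables, so cross-block cancellation must be ruled out, not assumed away. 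Your peeling induction on the globally largest top variable is the right repair: no substitution raises an index, so every monomial of $\gamma\cdot P_F$ containing $x_{t_k}$ comes from $\gamma\cdot f_k$, and the $x_{t_k}$-part determines the remaining $r-1$ triangular factors, after which one eliminates and recurses. Be aware, though, that this step genuinely needs $r\ge 2$: for $r=1$ the $x_{t_k}$-part of $\gamma\cdot f_k=x_{t_k}+\ell$ pins down nothing about $\ell$, and injectivity actually fails (e.g.\ $(x_a+x_0)+x_b=x_a+(x_b+x_0)$ with $0<a<b$), so your ``no surviving low-degree debris'' step must be stated and proved under a degree restriction that neither the proposition nor the paper's own proof makes explicit. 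Also note that your fallback of invoking the master formula of Section~\ref{subsec:general-counting} with $\beta_Q^{\rm mix}=\alpha_Q=0$ is circular here: vanishing of the collision exponent for disjoint tails is exactly the injectivity being proved, so the peeling argument is the actual content.

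Your final bookkeeping remark is a genuine catch rather than pedantry: since $P_F$ is invariant under permuting the tuple, a sum over \emph{ordered} tuples counts each orbit $k!$ times, and the paper's companion remark that ``different tuples have disjoint orbits'' is false for permuted tuples, which yield the identical seed and hence the identical orbit. The paper never addresses this; your convention (restrict to canonical representatives, or divide by $k!$) is needed for \eqref{eq:A-k-disjoint-decreasing-final} to count distinct codewords. In summary, your route coincides with the paper's, but you correctly isolate and attempt to close the two points the paper glosses over --- cross-block interference in the stabilizer analysis and the ordered-tuple multiplicity --- with the one caveat that your key injectivity lemma remains a sketch and requires $r\ge2$.
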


Here, the stabilizer of each tuple $F$ is trivial, so different
tuples have disjoint orbits, and there is no collision; $\alpha_Q=0$.




\begin{proposition}[Enumeration of nested rank-$\ell$ degree–drop $2d$ codewords]
\label{prop:nested-rank-l-deg-drop-enum}
Let $\C(\I)$ be a decreasing monomial code of length $2^m$ with
maximum degree $r\ge2$ and $d=2^{m-r}$.

A \emph{nested rank-$\ell$ degree–drop seed} is a triple $(h;S;x_j)$ with
\[
P(X)=h(X)\,Q(X),\qquad
Q(X)=\sum_{f\in S}\frac{f(X)}{h(X)} + X_j,
\]
where
\(
S\subseteq \I_r,\ |S|=\ell\ge0, 
h\in\I_{r-2},\quad
\gcd(X_j,f/h)=1\ \forall\,f\in S, P\in\operatorname{span}(\I),
\)
and, when $\ell\ge2$, $\gcd(f_a,f_b)=h$ for all distinct $f_a,f_b\in S$.
(For $\ell=0$ we have $S=\varnothing$ and $Q=X_j$.)

Write $S=\{f_1,\dots,f_\ell\}$ and set $q_i\triangleq f_i/h$ (so
$\deg(q_i)=2$).  For such a seed define
\[
\begin{aligned}
E(h;S;x_j)
&\;=\;
(r-2)+2\ell+|\lambda_h|
+\sum_{i=1}^{\ell}\bigl|\lambda_{f_i}(q_i)\bigr|
\\[-0.25em]
&\qquad
-\sum_{\mathclap{1\le a<b\le\ell}}\alpha_{q_a,q_b}
\;+\;1+|\lambda_{f_1\cdots f_{\ell}}(X_j)|,
\end{aligned}
\]
with the conventions for $\ell=0$ that all sums are $0$ and
$f_1\cdots f_\ell=1$ in the linear term
$|\lambda_{f_1\cdots f_{\ell}}(X_j)|$ (which then reduces to the 
head–constrained partition weight of $X_j$).  
Then, the $\LTA(m,2)$–orbit of $P$ has size
        \[
          \bigl|\LTA(m,2)\cdot P\bigr|
          \;=\;
          2^{E(h;S;x_j)},
        \]
and distinct seeds $(h;S;x_j)$ give disjoint $\LTA(m,2)$–orbits. 
        Consequently, the total number of weight–$2d_{\min}$ codewords
        of nested rank–$\ell$ degree–drop type in $\C(\I)$ is
        \begin{equation}
          \label{eq:nested-rank-l-enum}
          A^{\mathrm{nest-dp}}_{2d_{\min}}(\I)
          \;=\;
          \sum_{(h;S;x_j)} 2^{E(h;S;x_j)},
        \end{equation}
        where the sum runs over all seeds with $\ell=|S|\ge0$ in $\C(\I)$.
\end{proposition}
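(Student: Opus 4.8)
The plan is to prove the three assertions—weight $2d_{\min}$, orbit size $2^{E}$, and disjointness of seeds—in that order, and then assemble the enumerator \eqref{eq:nested-rank-l-enum}. First I would fix a seed $(h;S;x_j)$ and record that $P=h\cdot Q$ with $Q=\sum_{i=1}^{\ell}q_i+X_j$, $\deg(q_i)=2$, each $q_i$ variable-disjoint from $h$. Since $Q$ is exactly a rank-$\ell$ degree-drop kernel at inner degree $t=2$, the $r=2$ case of Lemma~\ref{lem:affine-deg-r} gives $\wt(Q)=2^{m-1}=2\cdot2^{m-2}$, i.e.\ $\Sigma(Q)=2$; nesting along the disjoint head $h$ (Lemma~\ref{lem:nesting} with $s=r-2$, so $r=s+t$) preserves the normalized weight, whence $\wt(P)=2d_{\min}$. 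Because $\LTA(m,2)$ acts by coordinate permutations inside $\C(\I)$, the whole orbit consists of weight-$2d_{\min}$ codewords, so it suffices to count it.

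For the orbit size I would use the factorization $|\LTA(m,2)\cdot P|=|\LTA(m,2)\cdot h|\cdot|G_h\cdot P|$, valid once $h$ is recoverable from $P$. The key structural fact is that every $\gamma\in\LTA(m,2)$ is lower-triangular, so $\gamma\cdot f=f+(\text{strictly }\preceq\text{-smaller terms})$; hence the $\preceq$-maximal monomials of any orbit element are precisely $\{f_1,\dots,f_\ell\}$, and (for $\ell\ge2$) their gcd returns $h$, giving $\Stab(P)\le G_h$ and the factorization. The head factor contributes $|\LTA(m,2)\cdot h|=2^{\deg(h)+|\lambda_h|}=2^{(r-2)+|\lambda_h|}$ by \eqref{eq:orbit-single-again}. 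For the residual factor $|G_h\cdot P|$ I would invoke the master orbit-size scheme of Section~\ref{subsec:general-counting}: with the head fixed, the quadratic tails $q_i$ each carry $\deg(q_i)+|\lambda_{f_i}(q_i)|=2+|\lambda_{f_i}(q_i)|$ head-constrained degrees of freedom, the linear tail $X_j$ carries $1+|\lambda_{f_1\cdots f_\ell}(X_j)|$ (its partition weight measured against the combined occupied set $\ind(f_1\cdots f_\ell)\supseteq\ind(h)$), and the overlaps among the quadratic tails subtract the pairwise collision exponents $\alpha_{q_a,q_b}$. Summing these yields $E$ exactly.

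The step I expect to be the main obstacle is the precise collision bookkeeping: showing $|G_h\cdot P|=2^{\,\beta_Q-\alpha_Q}$ with $\alpha_Q=\sum_{a<b}\alpha_{q_a,q_b}$ and \emph{no} higher-order corrections. Here I would model the degree-$r$ part of $\gamma\cdot P$ through the Minkowski sum $\cO_1+\cdots+\cO_\ell$ of the tail orbits $\cO_i=G_h\cdot q_i$, apply the pairwise-collision identity $|\cO_i+\cO_j|=|\cO_i||\cO_j|/2^{\alpha_{q_i,q_j}}$ from~\cite{vlad1.5d}, and argue via the hypothesis $\gcd(f_a,f_b)=h$ that triple and higher overlaps are forced to factor through the pairwise ones, so they contribute no new factor. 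A companion point is that the linear tail enters \emph{additively and collision-freely}: the coprimality $\gcd(X_j,q_i)=1$ together with $\deg(X_j)<\deg(q_i)$ prevents $X_j$ from colliding with the quadratic block, so its $1+|\lambda_{f_1\cdots f_\ell}(X_j)|$ parameters are independent of the rest. The degenerate case $\ell=0$ ($P=hX_j$, a single degree-$(r-1)$ monomial) falls outside the gcd-based factorization and is handled directly by the single-monomial orbit formula \eqref{eq:orbit-single-again}, consistent with the stated $\ell=0$ convention.

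Finally, for disjointness I would recover the seed from any orbit element $c=\gamma\cdot P$. As above, the $\preceq$-maximal monomials of $c$ are $\{f_i\}$, an $\LTA(m,2)$-invariant of the orbit; hence seeds with different top-monomial sets (different $h$ or $S$ relative to the decreasing structure) automatically have disjoint orbits. The remaining case—same $(h,S)$ but distinct drops $x_j\neq x_{j'}$—is the subtle one: I would isolate the degree-$(r-1)$ component of $c$ modulo the degree-$(r-1)$ contributions forced by the $\gamma\cdot f_i$ expansions, and use the coprimality constraints together with the canonical (decreasing) choice of representative to show that $x_j$ is an invariant of the refined orbit, so no $\gamma$ can interchange the two drops. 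Combining the orbit size $2^{E}$ with pairwise-disjointness and the fact that every nested degree-drop codeword lies in some seed's orbit by construction, summation over seeds gives \eqref{eq:nested-rank-l-enum}.
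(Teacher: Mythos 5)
Your weight computation (Lemma~\ref{lem:affine-deg-r} at inner degree $2$ followed by Lemma~\ref{lem:nesting}) is correct and matches Corollary~\ref{cor:nested-2d}, and your overall architecture (head exponent plus tail exponents minus pairwise collision exponents; seed recovery for disjointness; summation over seeds) mirrors the paper's. But the paper does \emph{not} prove the two hard steps in-house: it imports the Type--II orbit formula, the linear-orbit-under-$\Stab(S)$ lemma, and the kernel-disjointness result wholesale from \cite{rowshan2025weight}, whereas you attempt to rederive them — and your key structural claim is false. You assert that ``the $\preceq$-maximal monomials of any orbit element are precisely $\{f_1,\dots,f_\ell\}$,'' using this both for $\Stab_{\LTA(m,2)}(P)\le G_h$ (which underlies your factorization $|\LTA(m,2)\cdot P|=|\LTA(m,2)\cdot h|\cdot|G_h\cdot P|$) and for orbit disjointness. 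Counterexample: take $r=2$, $h=1$, $m\ge5$, the valid seed $P=x_2x_3+x_0x_1+x_4$ (so $\gcd(x_2x_3,x_0x_1)=1=h$), and $\gamma\in\LTA(m,2)$ acting by $x_2\mapsto x_2+x_0$, $x_3\mapsto x_3+x_1$. Then
\[
\gamma\cdot P
= x_2x_3+x_1x_2+x_0x_3+x_0x_1+x_0x_1+x_4
= x_2x_3+x_1x_2+x_0x_3+x_4,
\]
so the top monomial $x_0x_1$ is \emph{cancelled}: the set of degree-$r$ (equivalently $\preceq$-maximal) monomials is not an orbit invariant. This breaks your justification of the stabilizer containment, your claim that different $(h,S)$ automatically give disjoint orbits, and your plan to settle the case of equal $(h,S)$ with $x_j\neq x_{j'}$ by ``isolating the degree-$(r-1)$ component'' — the lower-degree part of $\gamma\cdot P$ likewise absorbs uncontrolled contributions from the $\gamma\cdot f_i$. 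The paper's proof avoids all of this by invoking \cite[Prop.~4]{rowshan2025weight} for disjointness of Type--II kernels.

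The second step, which you yourself flag as the main obstacle — $|G_h\cdot P|=2^{\beta_Q-\alpha_Q}$ with \emph{only} pairwise collision corrections, plus collision-free adjunction of the linear tail — remains a plan rather than a proof: the hypothesis $\gcd(f_a,f_b)=h$ does not by itself force triple and higher Minkowski-sum overlaps to factor through the pairwise exponents $\alpha_{q_a,q_b}$, and nothing in your sketch supplies that argument. In the paper this is exactly what is cited from \cite[Prop.~3, Thm.~5]{rowshan2025weight} (quadratic block) and \cite[Lemma~5, Thm.~6]{rowshan2025weight} (linear term); note also that in the paper the linear orbit is computed under the stabilizer of the whole tuple $S$, not merely inside $G_h$, which is precisely why the exponent is $1+|\lambda_{f_1\cdots f_\ell}(X_j)|$ — your phrasing treats the $X_j$-freedom as independent within $G_h$, which conflates the two groups. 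So while your $\ell=0$ handling and weight verification are sound, the orbit-size and disjointness claims in your proposal rest on a false invariance and an unproven no-higher-order-collision assertion: a genuine gap relative to the paper's citation-based proof.
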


This proposition is generalization of \cite[Lemma 1]{kasami1970weight}.

\clearpage
\bibliographystyle{IEEEtran}
\bibliography{refs}

@article{Arikan,
  author  = {Erdal Ar{\i}kan},
  title   = {Channel Polarization: A Method for Constructing Capacity-Achieving Codes for Symmetric Binary-Input Memoryless Channels},
  journal = {IEEE Transactions on Information Theory},
  volume  = {55},
  number  = {7},
  pages   = {3051--3073},
  year    = {2009},
  doi     = {10.1109/TIT.2009.2021379}
}

@article{dragoi2025weight,
  title={On Weight Enumeration and Structure Characterization of Polar Codes via Group Actions},
  author={Dragoi, Vlad-Florin and Rowshan, Mohammad},
  journal={arXiv preprint arXiv:2504.19544},
  year={2025}
}

@techreport{3GPP,
author = {3GPP},
month = {November},
note = {Reno, USA},
title = {Final report of 3GPP TSG RAN WG1 \#87 v1.0.0},
year = {2016},
url = {http://www.3gpp.org/ftp/tsg ran/WG1 RL1/TSGR1 87/Report/Final Minutes report RAN1\%2387 v100.zip},
}

@incollection{lin_costello,
address = {Upper Saddle River},
author = {S. Lin and D. J. Costello},
booktitle = {2nd Edition, Pearson},
pages = {395-400},
publisher = {Prentice Hall},
title = {Error Control Coding},
year = {2004},
}

@incollection{bardet2016crypt,
author = {M. Bardet and J. Chaulet and V. Dragoi and A. Otmani and J.-P. Tillich},
booktitle = {Post-Quantum Cryptography, vol. 9606},
pages = {118-143},
title = {Cryptanalysis of the {McEliece} Public Key Cryptosystem Based on Polar Codes},
year = {2016},
}

@inproceedings{bardet,
author = {M. Bardet and V. Dragoi and A. Otmani and J.-P. Tillich},
booktitle = {IEEE Int. Symp. Inf. Theory (ISIT)},
pages = {230-234},
title = {Algebraic properties of polar codes from a new polynomial formalism},
year = {2016},
}

@inproceedings{yao,
author={Yao, Hanwen and Fazeli, Arman and Vardy, Alexander},
  booktitle={2021 IEEE Int. Symp. Inf. Theory (ISIT)}, 
  title={A Deterministic Algorithm for Computing the Weight Distribution of Polar Codes}, 
  year={2021},
  volume={},
  number={},
  pages={1218-1223},
  }

@article{rowshan2023formation,
  title={On the Formation of Min-weight Codewords of Polar/PAC Codes and Its Applications},
  author={Rowshan, Mohammad and Dau, Son Hoang and Viterbo, Emanuele},
  journal={IEEE Trans. on Inf. Theory},
  year={2023},
  volume={69},
  number={12},
  pages={7627-7649},
}

@article{kasami1970weight,
author = {T. Kasami and N. Tokura},
journal = {Trans. Inf. Theory},
month = {November},
number = {6},
pages = {752-759},
title = {On the weight structure of {Reed-Muller} codes},
volume = {16},
year = {1970},
}

@INPROCEEDINGS{rowshan2024weight,
  author={Rowshan, Mohammad and Drăgoi, Vlad–Florin and Yuan, Jinhong},
  booktitle={2024 IEEE Int. Symp. on Inf. Theory (ISIT)}, 
  title={Weight Structure of Low/High-Rate Polar Codes and Its Applications}, 
  year={2024},
  volume={},
  number={},
  pages={2945-2950},
}

@article{rowshan2025weight,
  title={Weight Structure of Low/High-Rate Polar Codes and Weight Contribution-based Partial Order},
  author={Rowshan, Mohammad and Dr{\u{a}}goi, Vlad-Florin},
  journal={IEEE Trans. Inf. Theory},
  volume={71},
  number={12},
  pages={9340-9358},
}

@article{GEECB21,
author = {M. Geiselhart and A. Elkelesh and M. Ebada and S. Cammerer and S. t Brink},
journal = {IEEE Trans. Commun.},
month = {October},
number = {10},
pages = {6424-6438},
title = {Automorphism ensemble decoding of {Reed-Muller} codes},
volume = {69},
year = {2021},
}

@inproceedings{ellouze2023low,
  title={Low-complexity algorithm for the minimum distance properties of PAC codes},
  author={Ellouze, Malek and Tajan, Romain and Leroux, Camille and J{\'e}go, Christophe and Poulliat, Charly},
  booktitle={2023 12th Int. Symp. on Topics in Coding (ISTC)},
  pages={1--5},
  year={2023},
  organization={IEEE}
}

@ARTICLE{vlad1.5d,
  author={Drăgoi, Vlad-Florin and Rowshan, Mohammad and Yuan, Jinhong},
  journal={IEEE Trans. Commun.}, 
  title={On the Closed-Form Weight Enumeration of Polar Codes: 1.5d -Weight Codewords}, 
  year={2024},
  volume={72},
  number={10},
  pages={5972-5987},
}

@article{ye2024distribution,
  title={On the Distribution of Weights Less than 2w min in Polar Codes},
  author={Ye, Zicheng and Li, Yuan and Zhang, Huazi and Wang, Jun and Yan, Guiying and Ma, Zhiming},
  journal={IEEE Trans. Commun.},
  year={2024},
  publisher={IEEE}
}

@inproceedings{liu2025parity,
  author    = {Y. Liu and B. Wu and Y. Han and K. Niu},
  title     = {A parity-consistent decomposition algorithm to determine the weight distribution of polar codes},
  booktitle = {Proc. IEEE Int. Symp. Inf. Theory (ISIT)},
  year      = {2025},
  pages     = {1--6}
}

\clearpage
\appendices

\section{Inclusion-Exclusion Formula for Hamming Weight of Codewords}

\begin{lemma}\label{lem:pie}
Let $G$ be a $k \times n$ generator matrix for a linear code over $\mathbb{F}_2$, with rows $g_1, g_2, \dots, g_k \in \mathbb{F}_2^n$. Let $\mathcal{J} \subseteq [1,k]$ and $\mathbf{c} = \sum_{j \in \mathcal{J}} g_j$ (addition in $\mathbb{F}_2^n$) be the codeword obtained as the sum of the rows indexed by $\mathcal{J}$. The Hamming weight $\mathrm{wt}(\mathbf{c})$ of $\mathbf{c}$ is given by
\begin{equation}\label{eq:pie_codeword}
\boxed{\;\mathrm{wt}(\mathbf{c}) = \sum_{\emptyset \neq \mathcal{S} \subseteq \mathcal{J}} (-2)^{|\mathcal{S}|-1} \left| \bigcap_{j \in \mathcal{S}} \mathrm{supp}(g_j) \right|,\;}
\end{equation}
where $\mathrm{supp}(g_i) = \{ j \in \{1,2,\dots,n\} \mid g_{i,j} = 1 \}$ is the support of row $g_i$.
\end{lemma}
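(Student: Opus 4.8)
The plan is to prove the identity coordinate by coordinate, reducing the vector statement to a single scalar identity about the parity of a finite index set, and then summing over positions. This turns the inclusion–exclusion formula into a clean interchange-of-summation computation.

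First I would fix a coordinate $p\in\{1,\dots,n\}$ and introduce the set of \emph{active} rows at that position, $\mathcal{J}_p\triangleq\{j\in\mathcal{J}: g_{j,p}=1\}$. Because the sum defining $\mathbf{c}$ is taken over $\mathbb{F}_2$, the entry $c_p$ equals $1$ exactly when $|\mathcal{J}_p|$ is odd, so $\mathrm{wt}(\mathbf{c})=\sum_{p=1}^n \mathbf{1}[\,|\mathcal{J}_p|\ \text{odd}\,]$. The goal is then to show that the summand on the right-hand side of the lemma, restricted to a single position, reproduces this parity indicator.

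The core step is the algebraic identity expressing the parity (XOR) of bits $a_1,\dots,a_m\in\{0,1\}$ as a real polynomial. Since each factor $1-2a_i$ equals $+1$ or $-1$, we have $\prod_{i=1}^m(1-2a_i)=(-1)^{\#\{i:\,a_i=1\}}$, hence $a_1\oplus\cdots\oplus a_m=\tfrac12\bigl(1-\prod_i(1-2a_i)\bigr)$. Expanding the product over subsets and cancelling the empty-set term gives $a_1\oplus\cdots\oplus a_m=\sum_{\emptyset\neq\mathcal{S}}(-2)^{|\mathcal{S}|-1}\prod_{i\in\mathcal{S}}a_i$. Applying this with $a_j=g_{j,p}$ for $j\in\mathcal{J}$ yields $c_p=\sum_{\emptyset\neq\mathcal{S}\subseteq\mathcal{J}}(-2)^{|\mathcal{S}|-1}\prod_{j\in\mathcal{S}}g_{j,p}$. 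Finally I would sum over $p$ and interchange the two finite sums; since $\sum_p\prod_{j\in\mathcal{S}}g_{j,p}$ counts precisely the positions where every row of $\mathcal{S}$ is $1$, it equals $\bigl|\bigcap_{j\in\mathcal{S}}\mathrm{supp}(g_j)\bigr|$, and the claimed formula \eqref{eq:pie_codeword} drops out.

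The argument has no genuine obstacle: it is a double-counting / generating-function computation with only finite sums. The one point deserving care is the evaluation of the product (equivalently, the binomial theorem at $x=-2$, giving $(1-2)^{m}=(-1)^m$), which is exactly what converts the alternating weighted subset-sum into the parity indicator $\mathbf{1}[\,|\mathcal{J}_p|\ \text{odd}\,]$; everything else is routine bookkeeping with the order of summation. An equivalent route is induction on $|\mathcal{J}|$ via $\mathrm{wt}(\mathbf{a}+\mathbf{b})=\mathrm{wt}(\mathbf{a})+\mathrm{wt}(\mathbf{b})-2|\mathrm{supp}(\mathbf{a})\cap\mathrm{supp}(\mathbf{b})|$, but the coordinate-wise parity identity above is cleaner and avoids tracking how supports recombine as rows are added.
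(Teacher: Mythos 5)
Your proof is correct and follows essentially the same route as the paper's: both rewrite the per-position parity via the indicator $\tfrac12\bigl(1-\prod_{i}(1-2g_{i,p})\bigr)$, expand the product over subsets, and identify $\sum_p \prod_{j\in\mathcal{S}} g_{j,p}$ with $\bigl|\bigcap_{j\in\mathcal{S}}\mathrm{supp}(g_j)\bigr|$. The only difference is cosmetic — you cancel the empty-set term coordinate-wise before summing over positions, while the paper cancels it in aggregate as $n-n$.
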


\begin{proof}
Let $A_j = \mathrm{supp}(g_j)$ for $j \in \mathcal{J}$. The Hamming weight $\mathrm{wt}(\mathbf{c})$ counts positions $j \in \{1,2,\dots,n\}$ where $\mathbf{c}_j = 1$, i.e., where an odd number of rows $g_i$ for $i \in \mathcal{J}$ have $g_{i,j} = 1$. Define $d(j) = \sum_{i \in \mathcal{J}} g_{i,j}$ (sum as integers), so $\mathrm{wt}(\mathbf{c}) = \sum_{j=1}^n \mathbf{1}_{\{d(j) \text{ is odd}\}}(j)$. Using the parity indicator $\mathbf{1}_{\{d(j) \text{ is odd}\}} = \frac{1 - (-1)^{d(j)}}{2}$, we get
\[
\mathrm{wt}(\mathbf{c}) = \sum_{j=1}^n \frac{1 - (-1)^{d(j)}}{2} = \frac{1}{2} \left( n - \sum_{j=1}^n (-1)^{d(j)} \right).
\]
Compute $\sum_{j=1}^n (-1)^{d(j)}$:
\[
(-1)^{d(j)} = \prod_{i \in \mathcal{J}} (1 - 2 g_{i,j}),
\]
since $1 - 2 g_{i,j} = -1$ if $g_{i,j} = 1$ and $+1$ if $g_{i,j} = 0$. Thus,
\[
\sum_{j=1}^n (-1)^{d(j)} = \sum_{j=1}^n \prod_{i \in \mathcal{J}} (1 - 2 g_{i,j}).
\]
Expand the product:
\[
\prod_{i \in \mathcal{J}} (1 - 2 g_{i,j}) = \sum_{\mathcal{S} \subseteq \mathcal{J}} (-2)^{|\mathcal{S}|} \prod_{i \in \mathcal{S}} g_{i,j},
\]
so
\[
\begin{aligned}
\sum_{j=1}^n (-1)^{d(j)} &= \sum_{\mathcal{S} \subseteq \mathcal{J}} (-2)^{|\mathcal{S}|} \sum_{j=1}^n \prod_{i \in \mathcal{S}} g_{i,j} \\&= \sum_{\mathcal{S} \subseteq \mathcal{J}} (-2)^{|\mathcal{S}|} \left| \bigcap_{i \in \mathcal{S}} A_i \right|,
\end{aligned}
\]
since $\sum_{j=1}^n \prod_{i \in \mathcal{S}} g_{i,j} = |\bigcap_{i \in \mathcal{S}} A_i|$. The term for $\mathcal{S} = \emptyset$ is $n$. Substituting into the weight expression,
\[
\begin{aligned}
\mathrm{wt}(\mathbf{c}) &= \frac{1}{2} \left( n - n - \sum_{\mathclap{\emptyset \neq \mathcal{S} \subseteq \mathcal{J}}} (-2)^{|\mathcal{S}|} \left| \bigcap_{i \in \mathcal{S}} A_i \right| \right) \\&= \sum_{\mathclap{\emptyset \neq \mathcal{S} \subseteq \mathcal{J}}} (-1)^{|\mathcal{S}|+1} 2^{|\mathcal{S}|-1} \left| \bigcap_{i \in \mathcal{S}} A_i \right|.
\end{aligned}
\]
Since $(-1)^{|\mathcal{S}|+1} 2^{|\mathcal{S}|-1} = (-2)^{|\mathcal{S}|-1}$, we obtain
\[
\mathrm{wt}(\mathbf{c}) = \sum_{\emptyset \neq \mathcal{S} \subseteq \mathcal{J}} (-2)^{|\mathcal{S}|-1} \left| \bigcap_{j \in \mathcal{S}} A_j \right|,
\]
matching the stated formula.
\end{proof}
\begin{remark}
    The formula for $\mathrm{wt}(\mathbf{c})$ in Lemma \ref{lem:pie} is a parity-based adaptation of the principle of inclusion-exclusion (PIE) 
    \[
    \left|\bigcup_{j\in\J} A_j\right|=\sum_{\emptyset \neq \S \subseteq\J}(-1)^{|\S|+1}\left|\bigcap_{j \in \S} A_j\right|
    \]
    for finite sets $A_j = \mathrm{supp}(g_j)$, $j \in \mathcal{J}$. While standard PIE computes the size of the union $\left| \bigcup_{j \in \mathcal{J}} A_j \right|$ using alternating signs $(-1)^{|\mathcal{S}|+1}$, the given formula counts positions $j$ where an odd number of sets $A_j$ overlap, achieved by weighting intersections with $(-2)^{|\mathcal{S}|-1}$. This arises from the parity indicator $\frac{1 - (-1)^{d(j)}}{2}$, transforming the PIE sum to capture only odd-sized overlaps, aligning with the $\mathbb{F}_2$ sum in the codeword $\mathbf{c}$.
\end{remark}

\section{Proof of Proposition \ref{prop:general-weight}}
\begin{proof}
The support of $h\cdot f_i$ is the set of points where all variables in $h$ and all variables in $f_i$ are equal to 1. According to Lemma \ref{lem:pie}, we need to find
\[
\left|\bigcap_{i \in S} \operatorname{supp}\left(h\cdot f_i\right)\right|,
\]
where $\mathcal{S} \subseteq\{1, \ldots, q\}$ is a nonempty set. Given that $\operatorname{supp}\left(h\cdot f_i\right)$ is a set where all variables in $h\cdot f_i$ are 1, then $\bigcap_{i \in S} \operatorname{supp}\left(h\cdot f_i\right)$ is a set where all variables in every $h\cdot f_i$ are 1. That is, every variable appearing in any $f_i$ (for $i \in \mathcal{S}$ ) and all variables in $h$ must be 1. Thus, the intersection is the support of
\[
h \cdot \operatorname{LCM}\{f_i: i \in \S\},
\]
where LCM is the least common multiple of the monomials and collects all variables appearing in at least one. The degree of the intersection monomial is 
\[
\operatorname{deg}(h)+\operatorname{deg}\left(\operatorname{LCM}\left\{f_i: i \in \S\right\}\right).
\]
We denote the union-degree of $\S$ as
\[
u_{\S}\triangleq\operatorname{deg}\left(\operatorname{LCM}\left\{f_i: i \in \S\right\}\right),
\]
which tells you how many residual variables are ``forced to 1'' in the intersection. Knowing this and the fact that a monomial of degree $t$ has support size $2^{m-t}$, the size of intersection support is 
\[
\left|\bigcap_{i \in S} \operatorname{supp}\left(h f_i\right)\right|=2^{m-\left(\operatorname{deg} h+u_S\right)}.
\]
Now substitute $\operatorname{deg}(h)=r-a_{\text {max }}$ from \eqref{eq:deg-h}, we obtain 
\[
\left|\bigcap_{i \in \mathcal{S}} \operatorname{supp}\left(h \cdot f_i\right)\right|=2^{m-r} \cdot 2^{a_{\max }-u_S}=d \cdot 2^{a_{\max }-u_S},
\]
where $d = 2^{m-r}$. Now, using \eqref{eq:pie_codeword}, we can write 
\begin{equation*}
    \wt(P)=d \Sigma, \quad \Sigma=\sum_{\varnothing \neq \mathcal{S} \subseteq\{1, \ldots, q\}}(-2)^{|\mathcal{S}|-1} 2^{a_{\max }-u_\mathcal{S}}.
\end{equation*}
\end{proof}






\section{Proof of Lemma \ref{lem:dyadic}}

\begin{proof}
By Proposition~\ref{prop:general-weight}, the normalized weight associated
with the residual family $F=\{f_1,\dots,f_q\}$ is
\[
\Sigma(F)
=
\sum_{\varnothing\neq\mathcal S\subseteq\{1,\dots,q\}}
(-2)^{|\mathcal S|-1}\,2^{\,a_{\max}-u_{\mathcal S}}.
\]
Factor out $2^{a_{\max}-U}$:
\begin{align*}
\Sigma(F)
&=
2^{a_{\max}-U}
\sum_{\varnothing\neq\mathcal S}
(-2)^{|\mathcal S|-1}\,2^{\,U-u_{\mathcal S}}\\
&=
2^{a_{\max}-U}
\sum_{\varnothing\neq\mathcal S}
(-1)^{|\mathcal S|-1}\,2^{\,U-u_{\mathcal S}+|\mathcal S|-1}.
\end{align*}
Set
\[
N \triangleq 2^{U-a_{\max}}\Sigma(F)
=
\sum_{\varnothing\neq\mathcal S}
(-1)^{|\mathcal S|-1}\,2^{\,e_{\mathcal S}},
\]
\[
e_{\mathcal S} \triangleq U-u_{\mathcal S}+|\mathcal S|-1.
\]
For every nonempty $\mathcal S$ we have $u_{\mathcal S}\le U$ by the
definition of $U$, and $|\mathcal S|\ge1$, hence
\[
e_{\mathcal S}=U-u_{\mathcal S}+|\mathcal S|-1 \;\ge\; 0.
\]
Therefore each summand $(-1)^{|\mathcal S|-1}2^{e_{\mathcal S}}$ is an
integer, and the finite sum $N$ is an integer:
$N\in\mathbb{Z}$.

Since $k=U-a_{\max}$, we can rewrite
\[
\Sigma(F) = 2^{a_{\max}-U} N = \frac{N}{2^{U-a_{\max}}}
= \frac{N}{2^k}.
\]
Write the (finite) binary expansion of $N$ as
\[
N = \sum_{j=j_{\min}}^{j_{\max}} b_j 2^j,
\qquad b_j\in\{0,1\},
\]
where $j_{\min}$ and $j_{\max}$ are the smallest and largest indices
with $b_j=1$. Substituting into the previous identity gives
\[
\Sigma(F)
=
\frac{1}{2^k}\sum_{j=j_{\min}}^{j_{\max}} b_j 2^j
=
\sum_{j=j_{\min}}^{j_{\max}} b_j 2^{\,j-k},
\]
which is the desired dyadic decomposition.
\end{proof}

\section{Proof of Lemma \ref{lem:affine-deg-r}}

\begin{proof}
The support of $f$ has size $2^{m-r}=d$. The support of $(X_j+1)g$ consists of
the points with $X_j=0$ and all variables in $\ind(g)$ equal to $1$, so it has
size $2^{m-s-1}$. These supports are disjoint, since $f(X)=1$ forces $X_j=1$
while $(X_j+1)g(X)\neq 0$ forces $X_j=0$. Hence
\[
\wt\bigl(\ev(P)\bigr)
= 2^{m-r}+2^{m-s-1}
= d\bigl(1+2^{r-s-1}\bigr).
\]
\end{proof}

\section*{Proof of Lemma~\ref{lem:affine-deg-r}}
\begin{proof}
If $\ell=0$ then $P=g$ and
\[
\wt(P)=2^{m-(r-1)}=2^{m-r+1}=2d.
\]
On the other hand the formula gives
\[
\begin{aligned}
\wt(P)
&=
2^{m-1}\bigl(1-(1-2^{1-r})^{0}(1-2^{2-r})\bigr)\\
&=
2^{m-1}\cdot 2^{2-r}
=
2^{m-r+1},
\end{aligned}
\]
so the claim holds. Hence assume $\ell\ge1$.

Write $f_{\ell+1}=g$ and $F=\sum_{i=1}^{\ell+1} f_i=P$. Then
$\deg(P)=r$.  
For a nonempty $\mathcal S$ in \eqref{prop:general-weight} of Proposition~\ref{prop:general-weight}, let
\[
k=|\mathcal S\cap\{1,\dots,\ell\}|,\qquad
\varepsilon=\mathbf 1_{\{\ell+1\in\mathcal S\}}.
\]
By disjointness and $\deg(f_j)=r$, $\deg(g)=r-1$,
\[
u_{\mathcal S}=kr+\varepsilon(r-1),
\qquad
2^{r-u_{\mathcal S}}=2^{r-kr-\varepsilon(r-1)}.
\]

Split $\Sigma(F)=\Sigma_0+\Sigma_1$ according to $\varepsilon$.

1) $\varepsilon=0$ (subsets not containing $g$): $1\le k\le\ell$ and
\[
\begin{aligned}
\Sigma_0
&=
\sum_{\substack{\mathcal S\neq\varnothing\\\ell+1\notin\mathcal S}}
(-2)^{|\mathcal S|-1}2^{r-u_{\mathcal S}}
=
\sum_{k=1}^{\ell}\binom{\ell}{k}
(-2)^{k-1}2^{r-kr}
\\
&=
2^{r-1}\sum_{k=1}^{\ell}\binom{\ell}{k}(-2^{1-r})^k
=
2^{r-1}\bigl(1-(1-2^{1-r})^{\ell}\bigr).
\end{aligned}
\]

2) $\varepsilon=1$ (subsets containing $g$): $0\le k\le\ell$ and
\[
\begin{aligned}
\Sigma_1
&=
\sum_{\substack{\mathcal S\neq\varnothing\\\ell+1\in\mathcal S}}
(-2)^{|\mathcal S|-1}2^{r-u_{\mathcal S}}
=
\sum_{k=0}^{\ell}\binom{\ell}{k}
(-2)^k 2^{r-(kr+r-1)}\\
&=
2\sum_{k=0}^{\ell}\binom{\ell}{k}(-2^{1-r})^k
=
2(1-2^{1-r})^{\ell}.
\end{aligned}
\]

Thus
\[
\begin{aligned}
\Sigma(F)
&=
2^{r-1}\bigl(1-(1-2^{1-r})^{\ell}\bigr)
+2(1-2^{1-r})^{\ell}\\
&=
2^{r-1}
\Bigl(
  1-(1-2^{1-r})^{\ell}(1-2^{2-r})
\Bigr),
\end{aligned}
\]
and multiplying by $d=2^{m-r}$ gives the stated weight formula.
\end{proof}

\section{Proof of Corollary~\ref{lem:compl-flip}}

\begin{proof}
The support of $f$ has size $2^{m-r}=d$. The support of $(X_j+1)g$ consists of points where $X_j=0$ and all variables in $g$ are $1$, hence $| (X_j+1)g|=2^{m-s-1}$ and their supports are disjoint (since $f=1$ forces $X_j=1$). Thus
\[
\begin{aligned}
\wt(P)&=\wt(f)+\wt((X_1+1)g)=2^{m-r}+2^{m-s-1}
\\&= d\Big(1+2^{r-s-1}\Big).
\end{aligned}
\]
\end{proof}

\section{Proof of Lemma \ref{lem:shared-3term-weight}}

\begin{proof}
Apply Proposition~\ref{prop:general-weight} with $h\equiv1$, $q=3$,
$a_{\max}=3$ to $Q_B$ and $Q_C$.  Writing
\[
\Sigma(Q)
=\sum_{\varnothing\neq\mathcal S\subseteq\{1,2,3\}}
(-2)^{|\mathcal S|-1}\,2^{3-u_{\mathcal S}},
\]
where $u_{\mathcal S}=|\bigcup_{i\in\mathcal S}\ind(f_i)|$ (for $Q_B$)
or $u_{\mathcal S}=|\bigcup_{i\in\mathcal S}\ind(g_i)|$ (for $Q_C$),
we get:

- For $Q_B$:
  \[
  \ind(f_1)=\{1,2,3\},\ \ind(f_2)=\{2,4,5\},\ \ind(f_3)=\{3,4,6\},
  \]
  hence
  \[
  u_{\{1\}}=u_{\{2\}}=u_{\{3\}}=3,
  \]\[
  u_{\{1,2\}}=u_{\{1,3\}}=u_{\{2,3\}}=5,
  u_{\{1,2,3\}}=6,
  \]
  and
  \[
  \Sigma(Q_B)
  =3\cdot2^0-2\cdot3\cdot2^{-2}+4\cdot2^{-3}=2.
  \]

- For $Q_C$:
  \[
  \ind(g_1)=\{1,2,3\},\ \ind(g_2)=\{3,4,5\},\ \ind(g_3)=\{4,6,7\},
  \]
  hence
  \[
  u_{\{1\}}=u_{\{2\}}=u_{\{3\}}=3,
  \]\[
  u_{\{1,2\}}=u_{\{2,3\}}=5, 
  u_{\{1,3\}}=6, 
  u_{\{1,2,3\}}=7,
  \]
  and
  \[
  \Sigma(Q_C)
  =3\cdot2^0
   -2\bigl(2^{3-5}+2^{3-6}+2^{3-5}\bigr)
   +4\cdot2^{3-7}
  =2.
  \]

Thus, in ambient degree $3$,
\[
\wt(Q_B)=\wt(Q_C)=2^{m-3}\Sigma(Q_*)=2^{m-2}.
\]

Since $\deg(h)=r-3$ and $\ind(h)$ is disjoint from the variables of
$Q_B,Q_C$, Lemma~\ref{lem:nesting} with $s=\deg(h)$, $t=3$ and
$r=s+t=r$ gives
\[
\Sigma(P_B)=\Sigma(P_C)=\Sigma(Q_B)=\Sigma(Q_C)=2.
\]
With $d=2^{m-r}$ we obtain
\[
\wt(P_B)=\wt(P_C)=d\,\Sigma(P_*)=2^{m-r+1}=2d.
\]
\end{proof}

\section{Proof of Corollary~\ref{cor:nested-2d}}

\begin{proof}
Since $h$ and $Q$ are variable–disjoint, Lemma~\ref{lem:nesting}
applies with head degree $s$, kernel degree $t$, and ambient degree
$r=s+t$. Writing $d_Q=2^{m-t}$ and $d=2^{m-r}$, Lemma~\ref{lem:nesting}
states
\[
\Sigma(P)
\;\triangleq\;
\frac{\wt(P)}{d}
=
2^{\,r-(s+t)}\,\Sigma(Q).
\]
By assumption $\Sigma(Q)=2$ and $r=s+t$, hence
\[
\Sigma(P) = 2^{\,0}\cdot 2 = 2.
\]
Therefore
\[
\wt\bigl(\ev(P)\bigr)
=
\Sigma(P)\,d
=
2\cdot 2^{m-r}
=
2^{m-r+1}
=
2d.
\]

For case~(1), Lemma~\ref{lem:affine-deg-r} explicitly gives
$\wt(\ev(Q))=2^{m-t+1}$ when $\ell=0$ or $t=2$, so
$\Sigma(Q)=\wt(Q)/2^{m-t}=2$ and the above argument applies.

For case~(2), with $\deg(f)=t$ and $\deg(g)=t-1$,
Lemma~\ref{lem:compl-flip} yields
\[
\wt\bigl(\ev(Q)\bigr)
=
d_Q\bigl(1+2^{t-(t-1)-1}\bigr)
=
d_Q(1+1)
=
2d_Q
=
2^{m-t+1},
\]
so again $\Sigma(Q)=2$ and the same nesting argument gives
$\wt(\ev(P))=2d$.

This covers both affine degree–drop kernels and complementary–flip
kernels after nesting.
\end{proof}

\section{Proof for Proposition \ref{prop:k-disjoint-enum-decreasing-final}}

\begin{proof}
Fix $F=(f_1,\dots,f_k)\in\mathcal F_{k,r}(\I)$ and let
$S(F)=\bigcup_{i=1}^k\ind(f_i)$ be the union of the supports.
By definition, the supports $\ind(f_i)$ are pairwise disjoint, so we
may view the local LTA group $\Alow_{F}$ as the direct product
\[
  \Alow_F
  \;\cong\;
  \prod_{i=1}^k \Alow_{f_i}
\]
acting independently on each $f_i$.
For $(A_1,b_1),\dots,(A_k,b_k)\in\Alow_{f_i}$ we get a codeword of the
form
\[
  P_F^{(A_1,\dots,A_k)}
  \;=\;
  \sum_{i=1}^k (A_i,b_i)\cdot f_i.
\]
By construction of $\Alow_{f_i}$, each $(A_i,b_i)\cdot f_i$ is a
degree-$r$ monomial in $\I_r$ and their supports remain disjoint; thus
$P_F^{(A_1,\dots,A_k)}$ always has the disjoint $k$-sum shape and
weight $w_{k,r}$.

The orbit size (numerator of the Stab--orbit formula) is
\[
\begin{aligned}
|\Alow_F|
  &=
  \prod_{i=1}^k |\Alow_{f_i}\cdot f_i|
  \stackrel{\eqref{eq:orbit-single-again}}=
  \prod_{i=1}^k 2^{\deg(f_i)+|\lambda_{f_i}|}
  \\&=
  2^{\,k r + \sum_{i=1}^k|\lambda_{f_i}|}.
\end{aligned}
\]

We now show that the local stabiliser is trivial.

\paragraph*{Stabiliser analysis.}
Suppose $(A_1,b_1),\dots,(A_k,b_k)$ and
$(A_1',b_1'),\dots,(A_k',b_k')$ produce the same polynomial:
\[
  \sum_{i=1}^k (A_i,b_i)\cdot f_i
  \;=\;
  \sum_{i=1}^k (A_i',b_i')\cdot f_i.
\]
Since monomials are linearly independent and the supports of the
$f_i$’s are disjoint, the expansion into monomials with support
contained in each $\ind(f_i)$ is unique.  Hence for each fixed $i$
we must have
\[
  (A_i,b_i)\cdot f_i = (A_i',b_i')\cdot f_i
\]
as monomials.  But the orbit $\Alow_{f_i}\cdot f_i$ acts freely on
$f_i$ (see the discussion in Section~\ref{sec:prelim}), so
this implies $(A_i,b_i)=(A_i',b_i')$ for all $i$.
Therefore the only element in $\Alow_F$ that fixes $P_F$ is the
identity, i.e.
\[
  \Stab_{\Alow}(F) = \{\mathrm{id}\},
  \qquad
  |\Stab_{\Alow}(F)|=1.
\]

Thus the orbit size for $F$ is exactly $|\Alow_F|$, and summing this
over all templates $F\in\mathcal F_{k,r}(\I)$ yields
\eqref{eq:A-k-disjoint-decreasing-final}.
\end{proof}

\section{Proof for Proposition \ref{prop:nested-rank-l-deg-drop-enum}}

\begin{proof}

\emph{Orbit size for fixed seed.}
Write $S=\{f_1,\dots,f_\ell\}$ and $q_i=f_i/h$.  The degree–$r$ part of
$P$ is the Type–II kernel \cite[Thm.~2]
{rowshan2025weight}
\[
h\sum_{i=1}^{\ell}q_i,
\]
and by the general Type–II orbit formula
\cite[Prop.~3, Thm.~5]{rowshan2025weight} its $\LTA(m,2)$–orbit has
size
\[
2^{(r-2)+2\ell+|\lambda_h|
   +\sum_{i=1}^{\ell}|\lambda_{f_i}(q_i)|
   -\sum_{1\le a<b\le\ell}\alpha_{q_a,q_b}}.
\]
(For $\ell=0$, this reduces to $2^{(r-2)+|\lambda_h|}$, the usual
head–orbit size.)  Restricting to the stabilizer of the monomial
tuple~$S$ and letting it act on $X_j$, one gets
\[
\bigl|\LTA(m,2)_S\cdot X_j\bigr|
=
2^{1+|\lambda_{f_1\cdots f_{\ell}}(X_j)|},
\]
exactly as in \cite[Lemma~5, Thm.~6]{rowshan2025weight}; adding the
linear orbit creates no new collisions beyond the Type–II ones already
encoded by the $\alpha_{q_a,q_b}$.  The full orbit of $P$ thus factors
as a Cartesian product of the Type–II monomial orbit and this linear
orbit, so
\[
\bigl|\LTA(m,2)\cdot P\bigr|
=
2^{E(h;S;x_j)}.
\]

\smallskip\noindent
\emph{Disjointness of orbits.}
If two seeds $(h;S;x_j)$ and $(h';S';x_{j'})$ had intersecting orbits,
the gcd of all degree–$r$ monomials in the common codeword would equal
both $h$ and $h'$, hence $h=h'$.  Dividing by $h$ and discarding terms
of degree $<r$ yields two Type–II kernels
$\sum_{f\in S}q_f$ and $\sum_{f'\in S'}q'_{f'}$ with the same orbit.
By the disjointness result for Type–II kernels
\cite[Prop.~4]{rowshan2025weight} we must have $S=S'$ (up to canonical
ordering).  With $h=h'$ and $S=S'$, the remaining difference lies in
the linear term; since $X_j$ and $X_{j'}$ are supported on variables
disjoint from those of $h$ and the $q_i$, equality of polynomials
forces $X_j=X_{j'}$, hence $j=j'$.  Therefore the seeds coincide, and
distinct seeds give disjoint orbits.  Summing $2^{E(h;S;x_j)}$ over all
seeds yields \eqref{eq:nested-rank-l-enum}.
\end{proof}

\end{document}